\newcommand{\be}{\begin{equation}}
\newcommand{\ee}{\end{equation}}
\newcommand{\nn}{{\mathbbm{N}}}
\newcommand{\rr}{{\mathbbm{R}}}
\newcommand{\cc}{{\mathbbm{C}}}
\newcommand{\me}{\mathrm{e}}
\newcommand{\mi}{\mathrm{i}}
\newcommand{\id}{{\mathbbm{1}}}
\newcommand{\tr}{\textup{Tr}}
\theoremstyle{definition}
\newcommand{\mpw}[1]{\textcolor{blue}{Mischa: #1}}
\newcommand{\bee}{\begin{enumerate}}
\newcommand{\eee}{\end{enumerate}}
\newcommand{\bei}{\begin{itemize}}
\newcommand{\eei}{\end{itemize}}
\newtheorem{theorem}{Theorem}
\newtheorem{lemma}[theorem]{Lemma}
\newtheorem{remark}[theorem]{Remark}
\def\01{\{0,1\}}
\newcommand{\ket}[1]{|#1\rangle}
\newcommand{\bra}[1]{\langle#1|}
\newcommand{\proj}[1]{|#1\rangle\langle#1|}
\newcommand{\ketbra}[2]{|#1\rangle\langle#2|}
\newcommand{\alv}[1]{{\color{red}Alv: #1}}
\def\<{\langle}
\def\>{\rangle}
\newtheorem*{rep@theorem}{\rep@title}
\newcommand{\newreptheorem}[2]{%
\newenvironment{rep#1}[1]{%
 \def\rep@title{#2 \ref{##1} (restatement)}%
 \begin{rep@theorem}}%
 {\end{rep@theorem}}}
\begin{document}

\title{Dynamical Maps, Quantum Detailed Balance and Petz Recovery Map}

\author{\'{A}lvaro M. Alhambra}
\email{alvaro.alhambra.14@ucl.ac.uk}
\affiliation{Department of Physics and Astronomy, University College London, Gower Street, London WC1E 6BT, United Kingdom}

\author{Mischa P. Woods}
\email{mischa.woods@gmail.com}
\affiliation{Department of Physics and Astronomy, University College London, Gower Street, London WC1E 6BT, United Kingdom}
\affiliation{QuTech, Delft University of Technology, Lorentzweg 1, 2611 CJ Delft, Netherlands}

\begin{abstract}
Markovian master equations (formally known as quantum dynamical semigroups) can be used to describe the evolution of a quantum state $\rho$ when in contact with a memoryless thermal bath. This approach has had much success in describing the dynamics of real-life open quantum systems in the lab. Such dynamics increase the entropy of the state $\rho$ and the bath until both systems reach thermal equilibrium, at which point entropy production stops. Our main result is to show that the entropy production at time $t$ is bounded by the relative entropy between the original state and the state at time $2t$. The bound puts strong constraints on how quickly a state can thermalise, and we prove that the factor of $2$ is tight. The proof makes use of a key physically relevant property of these dynamical semigroups -- detailed balance, showing that this property is intimately connected with the field of recovery maps from quantum information theory. We envisage that the connections made here between the two fields will have further applications. We also use this connection to show that a similar relation can be derived when the fixed point is not thermal. 

\end{abstract}
\maketitle

\section{Introduction}

It is very often observed in nature that physical systems relax to an equilibrium state. This phenomenon, which has very evident consequences at the macroscopic scales of our everyday experience, ultimately relies on the dynamics of the microscopic components. This fact was understood in the early days of statistical mechanics, and since then a large amount of work has been produced with the aim of trying to understand how exactly physical systems reach thermal equilibrium.

Any such evolution will be ultimately generated through some reversible dynamics on a large composite system, that is effectively irreversible as seen by a smaller part of that composite system. This irreversibility means that, in a coarse-grained sense, entropy will be produced throughout the process. The entropy production can be linked to the fact that correlations between a big thermal object (a heat bath) and one smaller subsystem $S$ are increasingly harder to access, which forces the coarse-graining of the description \cite{esposito2010entropy}. Intuitively, the more irreversible a process is, the more entropy is produced, and the closer a particular system will be to equilibrium.

In this work we look at a commonly used family of quantum evolutions that model the dynamics of a system weakly coupled to a thermal bath, and show explicitly how the amount of entropy produced along a particular evolution is related to how much a state changes along that evolution. These maps were first studied by Davies \cite{davies1974} and are a quantum generalization of the classical Glauber dynamics.

In the limit of a large thermal bath, the total entropy produced by such a process is given by how much the free energy of a system decreases with time \cite{spohn1978entropy}. The free energy for a state $\rho_S(t)$ at time $t$ is defined as
\begin{equation}\label{eq:free}
F_\beta (\rho_S(t))=\text{Tr}[\hat H_S \rho_S(t)]+\frac{1}{\beta}\text{Tr}[\rho_S(t)\log{\rho_S(t)}],
\end{equation}
where $\hat H_S$ is the Hamiltonian of the subsystem of interest, and $\beta^{-1}$ is the temperature of the bath. 
Moreover, for an evolution from time $t=0$ to $t$, the total amount of Von Neumann entropy produced, the so-called \textit{Entropy Production} is given by $F_\beta (\rho_S(0))-F_\beta (\rho_S(t))= \beta \Delta  E  - \Delta S,$
with $\Delta E$, $\Delta S$ the changes in mean energy and Von Neumann entropy of the system. Due to the contractivity property of the quantum relative entropy, this quantity is non-negative and non-decreasing with $t\geq 0$.

The reason for this name is as follows. For a large thermal reservoir, small changes of energy (that is, heat transferred to the system) are proportional to changes of entropy in it, with proportionality constant $\frac{1}{\beta}$. Hence, we can identify the change in energy in the system with a change of entropy in the reservoir $\beta \Delta E  \simeq -\Delta S_{\text{bath}}$, so that the difference in free energy of the system for a time interval $\Delta t$ is equal to the total entropy generated during the interval $\Delta t$ in system and bath. Therefore, this entropy production constitutes a natural measure of the irreversibility of the process.

Our main result is Theorem \ref{thorem:Li-Winter conjecture for Davies maps}, which states that under the condition that the interaction between system and bath is time-independent, we can lower-bound the entropy production at time $t$ by the state at time $2t$. 

This sharpens some intuitive notions, namely that if not much entropy is produced during a time interval $\Delta t$, the state will not change very much during the time interval $2\Delta t$, but if it does, then a large amount of entropy must have been produced at an earlier time, namely during the time interval $\Delta t$.

Recovery maps have found many applications in quantum information theory, such as coding theorems \cite{beigi2015decoding,Brandao2016}, approximate error correction \cite{ng2010simple} or asymmetry \cite{marvian2016clocks}.
They also appear in the derivation of quantum fluctuation theorems \cite{aberg2016fully,alhambra2016fluctuating}. 

Our results, are inspired by findings in quantum information theory about recovery maps. Specifically, they are a consequence of the observation that if a dynamical map satisfies \textit{quantum detailed balance} (QDB), a property of thermodynamical processes, then this implies that the map is its own recovery map. The connection between information theory and thermodynamics goes back a long way, to the seminal work of Landauer \cite{landauer1961irreversibility} and has furthered our understanding of both significantly. Within the current surge of information-theory approaches to quantum thermodynamics (see \cite{goold2016role} for a review), our result provides another example of how ideas from one may find definite applications in the other.

We shall first introduce Davies maps, outline their properties. This is followed by the statement of the main result and a discussion on the bound itself. We finally conclude with some suggestions for open questions. 

\section{Davies maps and entropy production}\label{sec:C maps and S production Main text}

Davies maps are a particular set of quantum dynamical semigroups that describe the evolution of a system on a $d_S$ dimensional Hilbert space that is weakly interacting with a heat bath. The first rigorous derivation of their form was given in \cite{davies1974} (see \cite{alicki2007quantum,temme2013lower} for more modern treatments). As they are time-continuous quantum semigroups, their generator takes the form of a Lindbladian operator, which we define as
\begin{equation}\label{eq:lind}
\frac{\text{d} \rho_S(t)}{\text{d} t }= \mathcal{L} (\rho_S(t))+\mi \theta(\rho_S(t)),
\end{equation}
where $\mathcal{L}$ is called the \textit{Lindbladian} and $\theta(\cdot)=-[H_{\text{eff}}, \cdot]$ is called the unitary part, with $H_{\text{eff}}$ the effective Hamiltonian. The solution is a one-parameter family of CPTP maps $M_\Delta(\cdot)$, $\Delta\geq 0$ which governs the dynamics, $M_\Delta(\rho(t))= \rho(t+\Delta)$. We will not delve into the full details here, but instead highlight the important properties the canonical form of Davies maps, denoted $T_t(\cdot)$, possess:
\begin{itemize}
\item[1)] They arise from the weak system-bath coupling limit
\item [2)] They can be written in the form $T_t(\cdot)=\me^{\mi t\theta+t \mathcal{L}}(\cdot)$, with $\theta$ and $\mathcal{L}$ time independent
\item[3)]  $\theta$ and $\mathcal{L}$  commute: $\theta(\mathcal{L}(\cdot))=\mathcal{L}(\theta(\cdot))$
\item[4)] They have a thermal fixed point: $T_t(\tau_S)=\tau_S$, where $\tau_S$ is the Gibbs state of the system at temperature $T_S$.
\item [5)] Their Lindbladians and unitary part satisfy \textit{Quantum detailed balance} (QDB):
\begin{align}\label{eq:QDB}
\langle A, \mathcal{L}^\dagger (B) \rangle_{\Omega} &= \langle \mathcal{L}^\dagger (A), B \rangle_{\Omega},\\
[H_{\text{eff}},\Omega]&=0,
\end{align}
for all $A,B\in\cc^{d_S\times d_S}$, where $\mathcal{L}^\dag$ is the adjoint Lindbladian. $\Omega$ can be any quantum state. However, in the case of Davies maps, $\Omega=\tau_S$. The scalar product in Eq. \eqref{eq:QDB} is defined as
\begin{equation} \label{eq:scal}
\langle A, B \rangle_{\Omega} := \text{Tr}[\Omega^{1/2} A^\dagger \Omega^{1/2} B].
\end{equation}
This is sometimes referred to as \emph{reversibility} or KMS condition. It is stronger than 4), since it has as a consequence that $\Omega$ is the fixed point, as $\mathcal{L}(\Omega)=0$.
\end{itemize}
In Appendix \ref{App1}
 we give a more detailed account of the microscopic origin of these maps, and of the form of the weak coupling limit, property 1). In the literature, there are various different definitions of QDB which are in general not equivalent. We show in Appendix \ref{app:eqqdb}
  that for maps satisfying time translation symmetry, such as Davies maps, definition 5) is equivalent to the definition of QDB in \cite{alicki2007quantum,kossakowski1977}.

In addition to the properties above, it is sometimes assumed that:
\begin{itemize}
\item[6)] The dynamics associated with Davies maps converge to the fixed point, $\lim_{t\rightarrow\infty} T_t(\rho_S(0))=\tau_S$.
\end{itemize}
Such convergence is guaranteed if more stringent conditions are imposed on the Davies map  \cite{SpohnHerbert07,Spohn1977_converge1,Frigerio1977_converge2,Frigerio1978_converge3}. We will \textit{not} need to assume 6) here.
 
Since we wish to bound the distance from the state at time $t$ to the fixed point, we need a distance measure. For this we use the \textit{relative entropy} $D(\rho\|\sigma)=\text{Tr}[\rho (\log{\rho}-\log{\sigma})]$. This measure is meaningful since it is non-negative, zero iff $\rho=\sigma$, and is contractive under CPTP maps. For the special case that $\sigma$ is a Gibbs state, it has an interpretation in terms of a free energy,
\begin{equation}
D( \rho(t) || \tau_S) =\beta F_\beta (\rho_S(t))- \log{Z_S},
\end{equation}
where $Z_S=\tr{[e^{-\beta H_S}]}$ is the partition function of the system, which we assume is constant. 
We can thus write the entropy production in terms of a difference in relative entropy, as
\begin{align}\label{eq:entrop prod amint txt}
D( \rho(0) || \tau_S) -D( \rho(t) || \tau_S)&= \beta\left( F_\beta (\rho_S(0))-F_\beta (\rho_S(t)) \right)\end{align}

As one intuitively might expect, this entropy production only depends on the dissipative part of the dynamics, as we explain in Section 
 \ref{sec:QDB}
 of the appendix. Therefore, we will assume for simplicity that $\theta=0$ in the next Section unless stated otherwise.

If one were to change the initial state of the environment for the maximally mixed state, then the system can only exchange entropy, but not heat/energy with it. These correspond to unital maps, in which case the free energy is replaced with the entropy gain of the system alone. In that case, a lower bound on the entropy they produce in terms of the adjoint of the unital map can be found in \cite{buscemi2016approximate}.

\section{Main results}

Our main result is a tight lower bound on the change of free energy and total entropy produced, within a finite time. We start with a Lemma for Davies maps which is an initial step in its derivation:

\begin{lemma}\label{thorem:Li-Winter conjecture for Davies maps1}
All Davies maps $T_t(\cdot)$, satisfy the inequality
\be\label{Eq:proof theorem 1 eq}
 D (\rho_S(0)\|\tau_S)-D (\rho_S(t)\|\tau_S) \geq D\left(\rho_S(0) \big{\|}\tilde T_t(\rho_S(t))\right),
\ee
where $\tilde T_t(\cdot)$ is the \emph{time-reversed} map or Petz recovery map, defined as
\begin{equation}\label{eq:TQDB}
\tilde{T}_t(\cdot)= \tau^{1/2}_S T^\dagger_t \left(\tau^{-1/2}_S (\cdot) \tau^{-1/2}_S\right) \tau^{1/2}_S,
\end{equation}
with $T_t^\dag$ denoting the adjoint of $T_t$. 
\end{lemma}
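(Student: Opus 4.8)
The plan is to recognise the right-hand side as the relative entropy between $\rho_S(0)$ and the state produced by running the Petz map on $\rho_S(t)$, and to establish the claimed inequality as the \emph{recovery-map strengthening} of ordinary data processing. The baseline is the contractivity of relative entropy under $T_t$, namely $D(\rho_S(0)\|\tau_S)\ge D\big(T_t(\rho_S(0))\|T_t(\tau_S)\big)=D(\rho_S(t)\|\tau_S)$, where the equality uses the fixed-point property 4). The nonnegative gap in this inequality is exactly the entropy production on the left, so the content of the Lemma is that this gap upper-bounds how badly $\rho_S(0)$ fails to be recovered from $\rho_S(t)$ by the Petz map $\tilde T_t$.

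First I would extract the structural consequences of quantum detailed balance, Eq.~\eqref{eq:QDB}. Writing $\Gamma(\cdot)=\tau_S^{1/2}(\cdot)\tau_S^{1/2}$, definition \eqref{eq:TQDB} reads $\tilde T_t=\Gamma\circ T_t^{\dagger}\circ\Gamma^{-1}$, while the scalar product \eqref{eq:scal} is $\langle A,B\rangle_{\tau_S}=\Tr[A^{\dagger}\Gamma(B)]$. Feeding this into the self-adjointness \eqref{eq:QDB} of $\mathcal{L}^{\dagger}$ gives $\Gamma\,\mathcal{L}^{\dagger}\,\Gamma^{-1}=\mathcal{L}$, and hence, with $\theta=0$, $\tilde T_t=\Gamma\,\me^{t\mathcal{L}^{\dagger}}\,\Gamma^{-1}=\me^{t\mathcal{L}}=T_t$: a Davies map is its own Petz recovery map. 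Separately, the Davies generator commutes with the modular flow $\tau_S^{\mi s}(\cdot)\,\tau_S^{-\mi s}$ of the Gibbs state (a structural property, reflected in the commutation property 3)), so $T_t$ is modular covariant. Together these identify the recovered state as $\tilde T_t(\rho_S(t))=T_t(\rho_S(t))=\rho_S(2t)$, which is what later converts the Lemma into the $2t$-bound of the main theorem.

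With these in hand I would invoke the strengthened monotonicity of relative entropy under the CPTP map $\mathcal{N}=T_t$ with reference $\sigma=\tau_S=\mathcal{N}(\sigma)$, whose Petz map is $\mathcal{R}=\tilde T_t$. Here lies the main obstacle: the universally valid strengthenings only produce the \emph{fidelity} (or the measured relative entropy) of recovery, and generically involve a whole family of \emph{rotated} Petz maps $\mathcal{R}^{[s]}$; obtaining the full relative entropy $D\big(\rho_S(0)\|\tilde T_t(\rho_S(t))\big)$ with the ordinary Petz map is precisely the statement that is not known to hold for arbitrary channels (the Li--Winter form). This is exactly where detailed balance is indispensable: the modular-flow commutation derived above forces every $\mathcal{R}^{[s]}$ to coincide with $\tilde T_t$, collapsing the rotation average onto a single map. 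I expect the careful justification of this collapse, and of the consequent upgrade from the generic fidelity/measured-entropy bound to the sharp full relative-entropy inequality—using the self-adjointness of $\mathcal{L}$ with respect to the KMS inner product—to be the crux of the argument, with the remaining steps reducing to bookkeeping with $T_t(\tau_S)=\tau_S$ and the semigroup identity $T_tT_t=T_{2t}$.

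A possible alternative route, should the recovery-map machinery prove awkward, is to exploit that a detailed-balanced $\mathcal{L}$ is a gradient flow of $D(\cdot\|\tau_S)$ and to prove a differential inequality $\tfrac{\mathrm{d}}{\mathrm{d}s}D\big(\rho_S(0)\|\rho_S(2s)\big)\le -\tfrac{\mathrm{d}}{\mathrm{d}s}D(\rho_S(s)\|\tau_S)$, integrating it from $0$ to $t$ with the boundary value $D(\rho_S(0)\|\rho_S(0))=0$; I would regard verifying that such a pointwise comparison actually holds (rather than only in integrated form) as the delicate point of this second approach.
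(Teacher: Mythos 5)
There is a genuine gap, and it sits exactly where you flagged ``the crux.'' Your main route starts from the universal recoverability bounds, which only ever deliver the \emph{fidelity} of recovery, $-2\log F\big(\rho_S(0),\tilde T_t(\rho_S(t))\big)$, or at best the measured relative entropy $D_{\mathbb{M}}$. Collapsing the rotated Petz family $\mathcal{R}^{[s]}$ onto the single map $\tilde T_t$ via modular covariance does remove the rotation average, but it does \emph{not} upgrade the fidelity (or measured-relative-entropy) functional to the full relative entropy: no such upgrade mechanism is known, and the full-relative-entropy strengthening of data processing is false for general channels \cite{brandao2015quantum}. The paper itself makes this point: carrying out your program yields precisely Eq.~\eqref{eq:junge2 main text}, which, since $D(\rho\|\sigma)\geq -2\log F(\rho,\sigma)$, is \emph{strictly weaker} than the Lemma. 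So the ``bookkeeping'' you defer is not bookkeeping; it is the entire content of the statement, and detailed balance alone cannot supply it. Relatedly, you have misplaced where QDB enters: the paper proves Lemma~\ref{thorem:Li-Winter conjecture for Davies maps1} using only properties 1) and 4); QDB and property 3) are used afterwards (your correct computation $\Gamma\,\mathcal{L}^\dagger\,\Gamma^{-1}=\mathcal{L}$, hence $\tilde T_t=T_t$, is the paper's Theorem on dissipative recovery maps) to convert the Lemma into the $2t$-form of Theorem~\ref{thorem:Li-Winter conjecture for Davies maps}.

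The paper's actual proof bypasses the universal-recoverability machinery entirely and exploits the microscopic dilation of property 1): $T_t$ is the Van Hove limit ($\lambda\to 0^+$, $\tilde t\lambda^2=t$ fixed) of $\tr_B[U(\cdot)\otimes\tau_B U^\dagger]$. For a finite bath truncation one applies the exact identity $D(\gamma_{CD}\|\zeta_{CD})-D(\gamma_D\|\zeta_D)=D\big(\gamma_{CD}\big\|\exp(\ln\zeta_{CD}+\ln\id_C\otimes\zeta_D-\ln\id_C\otimes\zeta_D')\big)$ together with unitary invariance and data processing under the partial trace (the technique of \cite{wehner2015work} for thermal operations), which produces the Petz map of the dilation exactly, with the full relative entropy on the right-hand side from the start. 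The analytic work is then to control the error terms: a dedicated lemma shows via mean energy conservation and Pinsker's inequality that $\|U_n(\tau_S\otimes\tau_{B,n})^\alpha U_n^\dagger-(\tau_S\otimes\tau_{B,n})^\alpha\|_1 = O\big(\sqrt{\lambda\|\hat I_n\|}\,\big)$, uniformly enough to take first the infinite-bath limit $n\to\infty$ and then $\lambda\to 0^+$. Your fallback gradient-flow route is likewise not what the paper does, and as you note yourself, the pointwise differential comparison is unverified; in integrated form it is equivalent to what needs proving, so it does not provide an independent argument.
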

\begin{proof}
See Appendix 
\ref{sec:main theorem}.
\end{proof}
Eq. \ref{eq:TQDB} proves a physically relevant particular case of an open conjecture about general quantum maps first formulated in \cite{li2014squashed}. The strongest possible version of the conjecture is known to not be true in full generality \cite{brandao2015quantum}, although it has been shown for particular sets such as unital maps \cite{buscemi2016approximate}, classical stochastic matrices\cite{li2014squashed}, catalytic thermal operations \cite{wehner2015work} and we here show it for Davies maps.  All these results relate the decrease of relative entropy with a measure of how well a given pair of states can be recovered through a particular \emph{recovery map}, and are generalizations of an early result by Petz \cite{petz1986sufficient}. For the best results up to date on general quantum maps, see \cite{wilde2015recoverability,junge2015universal,sutter2016strengthened,sutter2016multivariate}.

For Lemma \ref{thorem:Li-Winter conjecture for Davies maps1} to hold, only properties 1) and 4) are required. In addition, we find that there is a connection between property 4) and the Petz recovery map which we will now explain. 
A quantum dynamical semi-group $M_t$ which obeys QDB  has a Petz recovery map $\tilde M_t$ which is equal to the map itself $\tilde M_t= M_t$ (See Theorem 
 \ref{lem:qdb} in Appendix). 
Petz derived his famous recovery map in 1986 \cite{petz1986sufficient} while the first appearance of the detailed balance condition goes back  at least to the work of Boltzmann in 1872 \cite{Boltzmann1964} and QDB to Alicki in 1976 \cite{Alicki_DB_first}. To the best of the authors knowledge, this connection between results from the communities of quantum information theory and quantum dynamical semi-groups 
 was previously  unknown. Perhaps the closest previous work, is \cite{oftherDBdef}, which \textit{define} Detailed Balance as the property that the recovery map is equal to the map itself. Our work implies that for the special case of the Petz recovery map, the Detail Balance definition of \cite{oftherDBdef}, is equal to definition 5) which is satisfied by Davies maps.

The classical definition of detailed balance, in terms of the transition probabilities $p(j |i)$ of a classical Master equation, implies that, at equilibrium, a particular jump between energy levels $E_i \rightarrow E_j$ has the same total probability as the opposite jump $E_j \rightarrow E_i$, such that $p(j |i) \frac{e^{-\beta E_i}}{Z}=p(i |j) \frac{e^{-\beta E_j}}{Z}$. The condition in Eq. \eqref{eq:QDB} is the most natural quantum generalization of that (although as shown in \cite{temme2010chi2} different ones are also possible). In that sense, QDB can be understood as the fact that a particular thermalization process coincides with its own time-reversed map, which is defined as in Eq. \eqref{eq:TQDB} (for more details, see e.g. \cite{crooks2008quantum,Ticozzi2010}).

On the other hand, the Petz recovery map $\tilde \Gamma (\cdot)$, given a state $\sigma$ and a CPTP map $\Gamma(\cdot)$, is formally defined as \cite{petz1986sufficient,petz1988sufficiency,petz2003monotonicity} 
\begin{equation}
\tilde{\Gamma}(\cdot)= \sigma^{1/2} \Gamma^\dagger \left(\Gamma(\sigma)^{-1/2} (\cdot) \Gamma(\sigma)^{-1/2}\right) \sigma^{1/2}.
\end{equation}

This map is such that we have that iff $D(\rho || \sigma)=D(\Gamma(\rho) || \Gamma(\sigma) )$ then $\tilde \Gamma(\Gamma(\rho))=\rho$ and $\tilde \Gamma(\Gamma(\sigma))=\sigma$. It appears in quantum information theory when one tries to find the best possible way to recover data after it is processed \cite{barnum2002reversing,wilde2013quantum}. 

We can hence rewrite Lemma \ref{thorem:Li-Winter conjecture for Davies maps1} using Eq. \eqref{eq:entrop prod amint txt} as
\begin{theorem}\label{thorem:Li-Winter conjecture for Davies maps}
All Davies maps $T_t(\cdot)$, satisfy the inequality
\be\label{Eq:proof theorem}
 F_\beta (\rho_S(0))-F_\beta (\rho_S(t)) \geq \frac{1}{\beta}D\left(\rho_S(0) \big{\|} \rho_S(2t)\right).
\ee
\end{theorem}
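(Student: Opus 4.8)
The plan is to reduce the statement to Lemma~\ref{thorem:Li-Winter conjecture for Davies maps1}, which has already done the analytic work, and then to identify the recovered state $\tilde T_t(\rho_S(t))$ with the physical state $\rho_S(2t)$. First I would rewrite the left-hand side using the free-energy interpretation of the relative entropy: by Eq.~\eqref{eq:entrop prod amint txt},
\be
\beta\left(F_\beta(\rho_S(0))-F_\beta(\rho_S(t))\right)=D(\rho_S(0)\|\tau_S)-D(\rho_S(t)\|\tau_S),
\ee
so that Lemma~\ref{thorem:Li-Winter conjecture for Davies maps1} immediately gives $\beta\left(F_\beta(\rho_S(0))-F_\beta(\rho_S(t))\right)\geq D\!\left(\rho_S(0)\,\big\|\,\tilde T_t(\rho_S(t))\right)$. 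It then only remains to show $\tilde T_t(\rho_S(t))=\rho_S(2t)$, after which dividing by $\beta$ yields Eq.~\eqref{Eq:proof theorem}.

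The crux, and the step I expect to be the main obstacle, is proving that for a Davies map the Petz recovery map equals the map itself, $\tilde T_t=T_t$; this is where QDB (property~5) enters, and it is the content of Theorem~\ref{lem:qdb}. The route I would take is to note that the QDB relation~\eqref{eq:QDB} states precisely that $\mathcal{L}^\dagger$ is self-adjoint with respect to the KMS inner product $\langle\cdot,\cdot\rangle_{\tau_S}$ of Eq.~\eqref{eq:scal}. Since $T_t^\dagger=\me^{t\mathcal{L}^\dagger}$ (taking $\theta=0$ as in the main text), this self-adjointness is inherited by $T_t^\dagger$, giving $\langle A, T_t^\dagger(B)\rangle_{\tau_S}=\langle T_t^\dagger(A), B\rangle_{\tau_S}$ for all $A,B$. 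Writing both sides out with $\langle A,B\rangle_{\tau_S}=\tr[\tau_S^{1/2}A^\dagger\tau_S^{1/2}B]$, using that $T_t^\dagger$ is Hermiticity-preserving together with cyclicity of the trace, and finally invoking the Hilbert--Schmidt duality $\tr[T_t^\dagger(A)^\dagger\,W]=\tr[A^\dagger\,T_t(W)]$, I would convert this scalar identity into the operator identity
\be
\tau_S^{1/2}\,T_t^\dagger(B)\,\tau_S^{1/2}=T_t\!\left(\tau_S^{1/2}B\,\tau_S^{1/2}\right)\qquad\forall B.
\ee
Substituting $B=\tau_S^{-1/2}(\cdot)\,\tau_S^{-1/2}$ turns the left-hand side into the definition~\eqref{eq:TQDB} of $\tilde T_t$ and the right-hand side into $T_t$, establishing $\tilde T_t=T_t$.

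With this identity in hand the proof closes quickly: the semigroup property $T_t\circ T_s=T_{t+s}$ gives $T_t(\rho_S(t))=T_t(T_t(\rho_S(0)))=T_{2t}(\rho_S(0))=\rho_S(2t)$, so $\tilde T_t(\rho_S(t))=T_t(\rho_S(t))=\rho_S(2t)$, completing the argument. The one point I would treat carefully is the unitary part $\theta$: when $\theta\neq0$ the second line of the QDB condition, $[H_{\text{eff}},\tau_S]=0$, guarantees that $\me^{\mi t\theta}$ commutes with conjugation by $\tau_S^{1/2}$ and leaves $D(\,\cdot\,\|\tau_S)$ invariant, so it neither spoils $\tilde T_t=T_t$ nor alters the entropy production; this justifies the reduction to $\theta=0$ used in the main section, with the full account deferred to the appendix.
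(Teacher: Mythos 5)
Your proposal follows the paper's own route almost step for step: Eq.~\eqref{eq:entrop prod amint txt} converts the free-energy difference into a relative-entropy difference, Lemma~\ref{thorem:Li-Winter conjecture for Davies maps1} supplies the recovery bound, QDB yields $\tilde T_t=T_t$, and the semigroup property (property~2) turns $\tilde T_t(\rho_S(t))$ into $\rho_S(2t)$. Your derivation of $\tilde T_t=T_t$ is correct and is only a mild repackaging of the paper's Theorem~\ref{lem:qdb}: where you exponentiate the KMS-self-adjointness of $\mathcal{L}^\dagger$ and convert the scalar identity into the operator identity $\tau_S^{1/2}T_t^\dagger(B)\tau_S^{1/2}=T_t(\tau_S^{1/2}B\tau_S^{1/2})$ via Hilbert--Schmidt duality, the paper iterates the operator identity $\mathcal{L}(\cdot)=\Omega^{1/2}\mathcal{L}^\dagger(\Omega^{-1/2}\cdot\Omega^{-1/2})\Omega^{1/2}$ to all powers and sums the exponential series; these are the same argument in two dialects.

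The one genuine error is in your closing remark on $\theta\neq 0$: it is \emph{not} true that the unitary part ``neither spoils $\tilde T_t=T_t$.'' Although $[H_{\text{eff}},\tau_S]=0$ does make $\me^{\mi t\theta}$ commute with conjugation by $\tau_S^{1/2}$ and leaves $D(\cdot\|\tau_S)$ invariant, the generator of the unitary part is \emph{anti}-self-adjoint with respect to the KMS inner product (a short computation using $[H_{\text{eff}},\tau_S]=0$ gives $\langle \theta^\dagger(A),B\rangle_{\tau_S}=-\langle A,\theta^\dagger(B)\rangle_{\tau_S}$), so the Petz construction flips its sign: $\tilde T_t(\cdot)=\me^{-\mi t\theta+t\mathcal{L}}(\cdot)\neq T_t(\cdot)$, which is exactly the paper's Lemma~\ref{lem:qdb unitaary}. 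The theorem survives for a different reason, given in the remark after Theorem~\ref{thorem:twice the evolved time}: by property~3) the reversed unitary part of $\tilde T_t$ cancels the forward unitary part of $T_t$, so $\tilde T_t(T_t(\cdot))=\me^{2t\mathcal{L}}(\cdot)$, while the left-hand side is also unitarily invariant; the bound therefore holds with the purely dissipative state $\me^{2t\mathcal{L}}(\rho_S(0))$ on the right, which equals $\rho_S(2t)$ only under the main text's convention $\theta=0$. Had your claimed identity $\tilde T_t=T_t$ held for $\theta\neq 0$, the right-hand side would be $D\big(\rho_S(0)\,\|\,T_{2t}(\rho_S(0))\big)$ with the full (rotated) evolution, which is \emph{not} what Lemma~\ref{thorem:Li-Winter conjecture for Davies maps1} delivers and is not a valid bound in general. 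Your core proof for $\theta=0$ is sound; only the justification of the reduction rests on a false identity.
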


\begin{proof}
See Appendix 
 \ref{sec:QDB}.
\end{proof}
In addition to assuming detailed balance, condition 5), we have also used condition 2). If the Lindbladian $\mathcal{L}$ is time dependent, i.e. 2) is not satisfied, Eq. \eqref{Eq:proof theorem} holds but with $\rho_S(2t)$ replaced with $T_t(\rho(t))$.

While, as mentioned at the end of Section \ref{sec:C maps and S production Main text}, entropy production is invariant under a change in the unitary part of the dynamics, it is interesting to find the Petz recovery map when $\theta$ is not set to zero. We show in Lemma 
\ref{lem:qdb unitaary}
 in the appendix, that the Petz recovery map $\tilde M_t(\cdot)$ of a map $M_t(\cdot)$ satisfying QDB and for which $\mathcal{L}$ and $\theta$ commute [property 3) of Davies maps], reverses the unitary part of the dynamics, while keeping the same dissipative part, that is
	\be 
	\tilde M_t(\cdot)=\me^{-\mi t\theta+t\mathcal{L}}(\cdot),
	\ee
and thus $\tilde T_t(T_t(\cdot)) =\me^{2t\mathcal{L}}(\cdot)$. So not only is the l.h.s. of Eq. \ref{Eq:proof theorem} invariant under a change in the unitary part of the dynamics; but so is the r.h.s.

In Fig. \ref{fig:Plot1} we show a simple example of the inequality for the case of Davies maps applied on a qutrits. Eq. \eqref{Eq:proof theorem} is tight at $t=0$ and also in the large time limit, as long as condition 6) is satisfied. In this limit, the total entropy that has been produced is equal to $\frac{1}{\beta}D(\rho(0) || \tau_S)$, which both sides of the inequality approach as $\rho_S \rightarrow \tau_S$.

\begin{figure}
\includegraphics[width=0.46\textwidth]{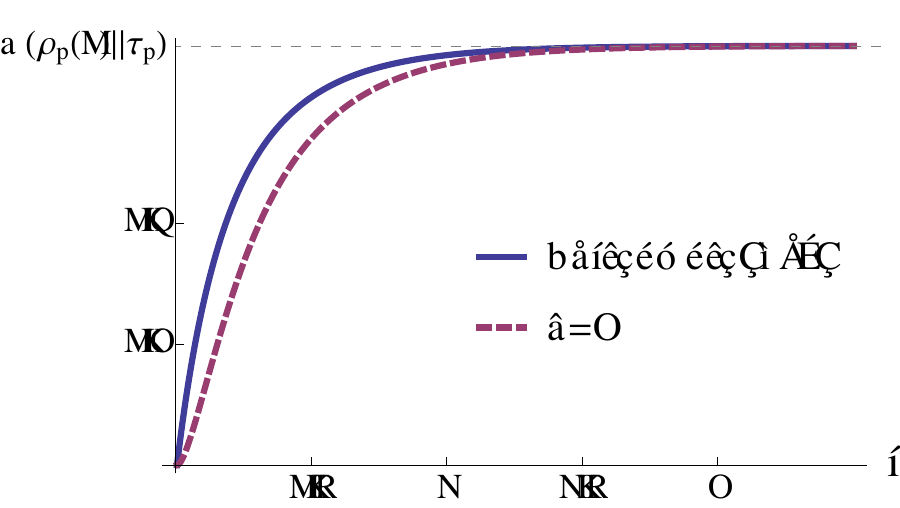}
\caption  {An example of the inequality in Theorem \ref{thorem:Li-Winter conjecture for Davies maps} for a Davies map on a qutrit given in \cite{roga2010davies}. 
The solid (blue) curve is the amount of entropy produced $\beta F_\beta(\rho(0))-\beta F_\beta(\rho(t))$ and the dashed (purple) the lower bound $D(\rho_S(0) || \rho_S(2t))$. It can be seen how the lower bound at $t=0$ starts at zero, and how for large times the two curves quickly converge to the total amount of entropy produced $ D (\rho_S(0) || \tau_S)$.  The $y$ axis is dimensionless and the $x$ axis is in units of the inverse of the coupling constant of the semigroup.} \label{fig:Plot1}
\end{figure}

On the other hand, for very short times, the lower bound becomes trivial. In particular, in Appendix 
\ref{Spohn}
 we show what both sides of the inequality tend to in the limit of infinitesimal time transformations. The entropy production becomes a \emph{rate}, and the lower bound to it approaches $0$. 

Non-trivial lower bounds on the rate of entropy production, in the form of log-Sobolev inequalities \cite{kastoryano2013quantum} can be used to derive bounds on the time it takes to converge to equilibrium for particular instances of Davies maps. Hence, given that Theorem \ref{thorem:Li-Winter conjecture for Davies maps} is completely general, and holds also for Davies maps that do not efficiently reach thermal equilibrium, the fact that the lower bound vanishes for infinitesimal times is not surprising.

Recall that the factor of $2$ in Eq. \eqref{Eq:proof theorem} is a consequence of the observation that the Petz recovery map is equal to the map itself. A natural question is then, \textit{is the factor $2$ fundamental?} We show that this is indeed the case with the following Theorem.

\begin{theorem} \label{lemmak}
	[Tightness of the entropy production bound] The largest constant $k\geq 0$ such that 
	\begin{equation}\label{eq:largek}
	F_\beta (\rho_S(0))-F_\beta (\rho_S(t)) \geq \frac{1}{\beta}D\left(\rho_S(0) \big{\|} \rho_S(k \,t)\right)
	\end{equation}
	holds for all Davies maps, is $k=2$.
\end{theorem}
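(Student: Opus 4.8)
The plan is to prove the two halves $k\geq 2$ and $k\leq 2$ separately. The bound $k\geq 2$ is immediate: Theorem \ref{thorem:Li-Winter conjecture for Davies maps} already asserts that Eq.~\eqref{eq:largek} holds at $k=2$ for \emph{every} Davies map, so the optimal constant is at least $2$. All the work is therefore in showing that no $k>2$ is admissible, i.e.\ that for each $k>2$ one can produce a Davies map, an initial state and a time $t$ at which Eq.~\eqref{eq:largek} is violated. Since optimality only requires a single such counterexample, I am free to choose the most convenient one.

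To minimise technical overhead I would work with a Davies map satisfying the convergence property 6) and restrict attention to an initial state $\rho_S(0)$ diagonal in the energy eigenbasis. Davies generators map the diagonal (populations) onto itself, decoupled from the coherences, so the whole trajectory stays diagonal and the dynamics reduces to a classical detailed-balanced Markov chain (a Glauber-type rate equation) on the populations $p(t)$, converging to the Gibbs distribution $\tau$. In this diagonal sector the relative entropies become classical Kullback--Leibler divergences, $\log$ acts entrywise, and all operator-derivative subtleties disappear, making every expansion below elementary.

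The core of the argument is a large-$t$ asymptotic analysis near the fixed point. Writing $\delta(s):=p(s)-\tau=\me^{s\mathcal{L}}\delta(0)$, the spectral gap $\lambda>0$ of the generator gives $\delta(s)\sim\me^{-\lambda s}\,v$ for a generic initial condition, with $v$ the slowest-decaying mode. Expanding to leading order one finds that the correction to the right-hand side is linear in $\delta$,
\begin{equation}
D\!\left(p(0)\,\|\,\tau\right)-D\!\left(p(0)\,\|\,p(kt)\right)\approx\sum_i\frac{p_i(0)}{\tau_i}\,\delta_i(kt)\sim C\,\me^{-\lambda k t},
\end{equation}
whereas the residual term on the left is quadratic,
\begin{equation}
\tfrac1\beta D\!\left(p(t)\,\|\,\tau\right)\approx\frac{1}{2\beta}\sum_i\frac{\delta_i(t)^2}{\tau_i}\sim b\,\me^{-2\lambda t},\qquad b>0,
\end{equation}
the positivity of $b$ following from the positive-definiteness of the $\chi^2$ quadratic form whenever the slow mode is excited. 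The defect in Eq.~\eqref{eq:largek} is then $\tfrac1\beta\big[C\,\me^{-\lambda k t}-b\,\me^{-2\lambda t}\big]$ to leading order. The established case $k=2$ forces $C\geq b>0$, which is consistent; but for any $k>2$ one has $\lambda k>2\lambda$, so the linear term becomes subleading and the defect is dominated by $-b\,\me^{-2\lambda t}<0$. Hence Eq.~\eqref{eq:largek} is violated for all sufficiently large $t$, ruling out every $k>2$ and proving optimality.

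I expect the main obstacle to be the rigor of the asymptotics rather than the idea: one must justify that the slowest mode is correctly isolated (handling possible degeneracy, complex eigenvalues or Jordan structure of $\mathcal{L}$, and the genericity condition $v\neq 0$ ensuring $b>0$), and that all neglected terms are genuinely of higher order in the relevant regime. Choosing an explicit low-dimensional instance---for example the qubit or qutrit Glauber dynamics of Fig.~\ref{fig:Plot1} with a nondegenerate gap---sidesteps these issues entirely, since then $v$, $\lambda$, $C$ and $b$ are available in closed form and the competition $C\,\me^{-\lambda k t}$ versus $b\,\me^{-2\lambda t}$ can be verified directly.
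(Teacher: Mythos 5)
Your proposal is correct and follows essentially the same route as the paper: the appendix proof likewise restricts to a diagonal (classical) initial state, takes the explicit qubit Davies map, sets $x=\me^{-At}$, and performs exactly your large-$t$ expansion, finding that the zeroth- and first-order terms cancel and the defect behaves as $\frac{(p(0)-q)^2}{q(1-q)}\bigl(x^{k}-\tfrac{1}{2}x^{2}\bigr)+\mathcal{O}(x^{3})$, which is negative for small $x$ whenever $k>2$. Your spectral formulation ($C\,\me^{-\lambda k t}-b\,\me^{-2\lambda t}$ with the quadratic $\chi^{2}$ term dominating for $k>2$) is the general form of this computation, and your suggested fallback to an explicit low-dimensional instance with closed-form quantities is precisely what the paper does.
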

\begin{proof}
	Due to Theorem \ref{thorem:Li-Winter conjecture for Davies maps}, we only need to find a simple family of Davies maps for which the violation is proven analytically for all $k>2$. See Appendix 
	\ref{app:lemmak} for proof.
\end{proof}
 See Fig. \ref{fig:Plot2} for more details. This means that Eq. \eqref{thorem:Li-Winter conjecture for Davies maps} is the strongest constraint of its kind that Davies maps obey, and it hence sets an optimal relation between how much the free energy and the systems state at a later time change during a  thermalization process.

\begin{figure}
\includegraphics[width=0.46\textwidth]{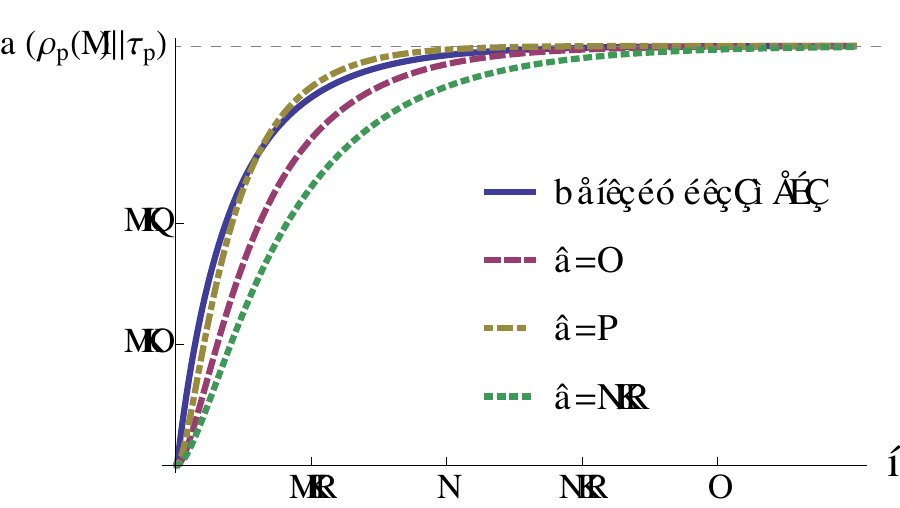}
\caption{Example plots for Theorem \ref{lemmak} for the Davies map for qubits from \cite{roga2010davies}. The solid (blue) curve is the amount of entropy produced $\beta F_\beta(\rho(0))-\beta F_\beta(\rho(t))$ (l.h.s. of Eq. \eqref{eq:largek}) while the dashed lines correspond to $D(\rho(0)\|\rho(k t))$ (r.h.s. of Eq. \eqref{eq:largek}) for different $k$. We see that when the constant $k$ is greater than $2$ the bound does not hold any more, showing that the $k=2$ case is indeed special. For $k<2$ the bound holds intuitively (given that it holds for $k=2$), but results in a worse bound. This shows that the constraint set by Eq. \eqref{Eq:proof theorem} reflects a special feature of how Davies maps thermalize. Moreover, we see that a $k>2$ would predict (incorrectly) a faster thermalisation rate, thus confirming that Eq. \eqref{Eq:proof theorem} is an implicit universal bound on the rate of thermalisation for Davies maps. The $y$ axis is dimensionless and the $x$ axis is in units of the inverse of the coupling constant of the semigroup. \label{fig:Plot2}}
\end{figure}

\section{Beyond Davies maps}
We now turn our attention to what recent developments from quantum information theory can say about convergence of dynamical semi-groups in general. 
A recent advancement in quantum information is the development of \textit{universal recoverability} maps \cite{wilde2015recoverability,junge2015universal,sutter2016multivariate}. By universal recoverability, it is meant that given a state $\sigma$ and a CPTP map $\Gamma$, one can use the recovery map to lower bound the relative entropy difference $D(\rho\|\sigma)-D(\Gamma(\rho)\| \Gamma(\sigma))$ for \textit{all} quantum states $\rho$. In general the lower bound takes on a complicated form (see Appendix 
\ref{sec: appendix Maps Beyond Davies}). 
However, for the case of dynamical semi-groups satisfying QDB and the following property, the bound is more explicit.

Let us assume that we have a one-parameter dynamical semi-group $M_t(\cdot)$ equipped with a fixed point $\Omega$ that satisfies a condition we call \textit{Time-translation symmetry w.r.t. fixed point} (TTSFP), 
\begin{equation}\label{eq:tts}
\mathcal{L}(\cdot) = \Omega^{i t} \mathcal{L} \left(\Omega^{-i t} (\cdot) \Omega^{i t}\right) \Omega^{-i t} \,\,\, \forall t \in \mathbb{R}.
\end{equation}
This condition is satisfied for example by dynamical semi-groups which arise naturally in the weak coupling limit or the low-density limit. Davies maps are one such example, but there are others \cite{Mwolf_Cubitt}.

The properties lead to the following result:
\begin{theorem}
Let the Quantum Dynamical Semigroup $M_t(\cdot)$ satisfy QDB and TTSFP. Then the following holds
\begin{equation}\label{eq:junge2 main text}
D(\rho(0) || \Omega) - D(\rho(t) || \Omega) \ge -2 \log{F\big(\rho, M_t (\rho(t))\big)},
\end{equation}
where $F(\rho,\sigma)=\tr[\sqrt{\sqrt{\sigma}\rho \sqrt{\sigma}}]$ is the quantum fidelity.
Moreover, if the generators are time-independent we may write $M_t(\rho(t))=\rho(2 t)$.
\end{theorem}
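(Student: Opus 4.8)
The plan is to derive this bound as a corollary of the universal recoverability lower bound of Junge, Renner, Sutter, Wilde and Winter \cite{junge2015universal}, which is stated in Appendix \ref{sec: appendix Maps Beyond Davies} and which, for any states $\rho,\sigma$ and any CPTP map $\mathcal{N}$, reads
\be
D(\rho\|\sigma)-D(\mathcal{N}(\rho)\|\mathcal{N}(\sigma))\geq -2\log F\big(\rho,\mathcal{R}_{\sigma,\mathcal{N}}(\mathcal{N}(\rho))\big).
\ee
Here $\mathcal{R}_{\sigma,\mathcal{N}}$ is the universal recovery map, a fixed average $\int\mathrm{d}s\,\beta_0(s)\,\mathcal{P}^{s}_{\sigma,\mathcal{N}}$ of the \emph{rotated} Petz maps $\mathcal{P}^{s}_{\sigma,\mathcal{N}}=\alpha^{\sigma}_{s}\circ\tilde{\mathcal{N}}\circ\alpha^{\mathcal{N}(\sigma)}_{-s}$, with $\alpha^{\omega}_{s}(\cdot)=\omega^{\mi s}(\cdot)\omega^{-\mi s}$ the modular flow, $\tilde{\mathcal{N}}$ the ordinary Petz map, and $\beta_0$ a fixed probability density. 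I would apply this with $\mathcal{N}=M_t$, $\sigma=\Omega$ and $\rho=\rho(0)$, so that the left-hand side is exactly $D(\rho(0)\|\Omega)-D(\rho(t)\|\Omega)$ after using $M_t(\Omega)=\Omega$.

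The heart of the argument is to show that under TTSFP the universal recovery map collapses to the ordinary Petz map. First I would use that $\Omega$ is the fixed point: since $M_t(\Omega)=\Omega$ the two reference states coincide, $\mathcal{N}(\sigma)=\sigma=\Omega$, so both modular flows in $\mathcal{P}^{s}$ are generated by the \emph{same} state and $\mathcal{P}^{s}_{\Omega,M_t}=\alpha^{\Omega}_{s}\circ\tilde{M}_t\circ\alpha^{\Omega}_{-s}$. Next I would lift TTSFP from the generator to the map: condition \eqref{eq:tts} says precisely that $\mathcal{L}\circ\alpha^{\Omega}_{s}=\alpha^{\Omega}_{s}\circ\mathcal{L}$, and exponentiating gives $M_t=\me^{t\mathcal{L}}$ commuting with $\alpha^{\Omega}_{s}$ (when $\theta\neq0$ the unitary part also commutes with $\alpha^{\Omega}_s$ because $[H_{\text{eff}},\Omega]=0$ by QDB). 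Taking adjoints, and using $(\alpha^{\Omega}_{s})^{\dagger}=\alpha^{\Omega}_{-s}$, shows $M_t^{\dagger}$ commutes with the modular flow as well. A short computation—pushing $\Omega^{\pm1/2}$ and $\Omega^{\pm\mi s}$ past one another (they are powers of the same operator) and moving $\alpha^{\Omega}_{s}$ through $M_t^{\dagger}$—then shows that the ordinary Petz map $\tilde{M}_t(\cdot)=\Omega^{1/2}M_t^{\dagger}(\Omega^{-1/2}(\cdot)\Omega^{-1/2})\Omega^{1/2}$ inherits this commutation, $\tilde{M}_t\circ\alpha^{\Omega}_{s}=\alpha^{\Omega}_{s}\circ\tilde{M}_t$. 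Hence the two rotations in $\mathcal{P}^{s}_{\Omega,M_t}$ cancel for every $s$, every rotated Petz map equals $\tilde{M}_t$, and the average yields $\mathcal{R}_{\Omega,M_t}=\tilde{M}_t$.

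Finally I would invoke QDB: by Theorem \ref{lem:qdb} a dynamical semigroup obeying QDB is its own Petz recovery map, $\tilde{M}_t=M_t$. Substituting $\mathcal{R}_{\Omega,M_t}=\tilde{M}_t=M_t$ into the recoverability bound and using $M_t(\rho(0))=\rho(t)$ gives exactly \eqref{eq:junge2 main text}. For the last claim, time-independence of the generators gives the semigroup identity $M_t\circ M_t=M_{2t}$, so $M_t(\rho(t))=M_{2t}(\rho(0))=\rho(2t)$.

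I expect the collapse step to be the main obstacle. One must be careful about the exact placement of the modular rotations $\alpha^{\Omega}_{\pm s}$ in the rotated Petz map and verify that the commutation of $M_t^{\dagger}$ with the modular flow genuinely pushes \emph{both} of them through to cancel, rather than leaving $\tilde{M}_t$ conjugated by a residual rotation. Establishing cleanly that TTSFP lifts from $\mathcal{L}$ (and the unitary part) to $\tilde{M}_t$, and that this is what forces the universal recovery map to reduce to the plain Petz map, is where the real work lies; once that is in place, the combination with QDB and the semigroup property is immediate.
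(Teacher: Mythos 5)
Your proposal is correct and follows essentially the same route as the paper's Appendix \ref{sec: appendix Maps Beyond Davies}: invoke the universal recoverability bound of \cite{junge2015universal}, use the fixed point and TTSFP to show the rotated Petz maps all coincide with the plain Petz map so the average collapses, identify the Petz map with $M_t$ itself via QDB (Theorem \ref{lem:qdb}), and use the semigroup property for $M_t(\rho(t))=\rho(2t)$. Your explicit verification that covariance under the modular flow $\alpha^{\Omega}_s$ lifts from $\mathcal{L}$ to $M_t$, to $M_t^{\dagger}$, and finally to $\tilde M_t$ in fact spells out details the paper leaves implicit, and is carried out correctly.
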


It is well known that $D(\rho\|\sigma)\geq -2 \log F(\rho\|\sigma)$ with equality \textit{only} for special instances. Therefore, for Davies maps, Eq. \eqref{eq:junge2 main text} is satisfied but with a weaker bound than Theorem \ref{thorem:Li-Winter conjecture for Davies maps}.

\section{Conclusion}

One of the main features in the study of dynamical thermalisation processes, such as Davies maps, is QDB. By using tools from quantum information theory, we show that the entropy produced after a time $t$, is lower bounded by how well one can \textit{recover} the initial $\rho_S(0)$ state from the state $\rho_S(t)$ via a recovery map. We then show that, due to QDB, the best way to perform the recovery is to time evolve \textit{forward} in time an amount $t$ to the state $\rho_S(2t)$. What's more, if one time evolves $\rho_S(t)$ for time $t'<t$, a worse bound is generated, while if one evolves for $t'>t$, the bound is not true for all Davies maps; thus showing that the connection between reversibility and recoverability suggested by QBD leads to tight dynamical bounds.

One of the important questions regarding Davies maps is how fast they converge to equilibrium. There have been several approaches to this question, mostly inspired by their classical analogues, which include the computation of the spectral gaps \cite{Alicki_Fannes_Horodecki,temme2010chi2,temme2013lower} or the logarithmic-Sobolev inequalities \cite{kastoryano2013quantum,temme2014hypercontractivity}. In particular we note that the latter take the form of upper bounds on distance measures between $\rho_S(t)$ and the thermal state. Likewise Eq. \eqref{Eq:proof theorem} can be re-arranged to give an upper bound in terms of the  relative-entropy to the Gibbs state, $D(\rho_S(t)\|\tau_S) \leq D(\rho_S(0)\|\tau_S)-D\left(\rho_S(0) \big{\|} \rho_S(2t)\right))$. 
It would be interesting to know if the bound of Eq. \eqref{Eq:proof theorem}, for \textit{primitive} Davies Maps, i.e. the dynamics converge to a unique fixed point, contains information about their asymptotic convergence. For instance, one could look at how fast is the inequality saturated in particular cases. We however leave this for future work.

 Another potential application of our work in open quantum systems, is to use a tightened monotonicity inequality to find when \emph{information backflow} occurs in non-Markovian dynamics  \cite{angelRev,Angel_PRL,laine2010measure}.
 
The condition of detailed balance is ubiquitous in thermalization processes, and in particular, current algorithms for simulating thermal states on a quantum computer, such as the quantum Metropolis algorithm \cite{temme2011quantum}, obey it, which makes it all the more interesting. As such, the useful connection we establish here between the Petz recovery map and QDB, is likely to have further implications for both thermodynamics and information theory.
\vspace{0.3CM}
\acknowledgments
The authors would like to thank David Sutter, Michael  Wolf, Nilanjana Datta, Mark Wilde and Toby Cubitt for helpful discussions. AA and MW acknowledge support from FQXi and EPSRC. This work was partially supported by the COST Action MP1209.

\bibliographystyle{unsrt} \bibliographystyle{apsrev4-1}
\bibliography{References}

\clearpage
\widetext
\appendix
\section{Technical results}\label{App1}
\subsection{Davies maps and conditions for Lemma \ref{thorem:Li-Winter conjecture for Davies maps1}}\label{sec:Conditions for Theorem Li-winter conjec}

Davies maps are derived from considering the dynamics of a state $\rho_S\in \mathcal{S}(\mathcal{H_S}),$ where $\mathcal{H_S}$ is of finite dimension $d_S$, in contact with a thermal bath on an infinite dimensional Hilbert space $\mathcal{H_B}$. We will here specify the minimal assumptions about the bath and its interaction with the system necessary for the derivation of Lemma \ref{lem:fixed pt almost comm} and Lemma \ref{thorem:Li-Winter conjecture for Davies maps1}. In order to guarantee other properties, such as the existence of a fixed point or detailed balance, more subtle constraints are also necessary.

Let $\hat H_B$ be a self-adjoint Hamiltonian on $\mathcal{H_B}$. Since we want states on $\hat H_B$ to be thermodynamically stable, we assume that $Z_B=\tr[\exp(-\beta \hat H_B)]<\infty$ for all $\beta>0$. $\hat H_B$ must therefore have a purely discrete spectrum, which is bounded below and has no finite limit points; that is, there are only a finite number of energy levels in any finite interval $\Delta E$. The  quantum state $\rho_S\in \mathcal{S}(\mathcal{H_S})$ with its free self-adjoint Hamiltonian $\hat H_S$ of finite dimension interacts with the system via a bounded interaction term $\hat I\in\mathcal{B}(\mathcal{H}_S\otimes\mathcal{H}_B)$, with a parameter $\lambda>0$ determining the interaction strength as follows
\be\label{eq:Hamil on S B} 
\hat H_{SB}=\hat H_S\otimes\id_B+\id_S\otimes\hat H_B+\lambda\,\hat I.
\ee
The initial state on $\mathcal{S}(\mathcal{H}_S\otimes\mathcal{H}_B)$ is assumed to be product, $\rho_S\otimes\tau_B$, with $\tau_B$ the Gibbs state at inverse temperature $\beta$. The dynamics of the system at time $\tilde t$ is given by the unitary operator 
\be
U(\tilde t):=\me^{-\mi \tilde t \hat H_{SB}}
\ee
after tracing out the environment. More precisely, by
\be 
\tr_B\left[ U(\tilde t) \rho_S\otimes\tau_B\, U^\dag(\tilde t) \right]\in\mathcal{S}(\mathcal{H}_S),
\ee
where $U^\dag$ denotes the adjoint of $U$.

The Davies map $T_t(\cdot)$ is defined by taking the limit that the interaction strength $\lambda$ goes to zero, while the time $\tilde t$ goes to infinity while maintaining $\tilde t \lambda^2:=t$ fixed. More concisely,
\be\label{eq:Davies map limit}
T_t(\cdot)=\lim_{\lambda\rightarrow 0^+} \tr_B\left[ U(\tilde t) (\cdot)\otimes\tau_B\, U^\dag(\tilde t)  \right]\in\mathcal{S}(\mathcal{H}_S) \quad \text{subject to } \tilde t \lambda^2=t  \text{ fixed.}
\ee
It is assumed that in this limit $U(\tilde t)$ and its inverse $U^\dag(\tilde t)$ are still unitary operators mapping states on $\mathcal{S}(\mathcal{H}_S\otimes\mathcal{H}_B)$ to states on $\mathcal{S}(\mathcal{H}_S\otimes\mathcal{H}_B)$. To gain more physical insight into this construction, we refer to \cite{SpohnHerbert07,davies1979generators,davies1974}. We remind the reader that the conditions described in Section \ref{sec:Conditions for Theorem Li-winter conjec} are not sufficient for the map $T_t(\cdot)$ to satisfy other properties, such as the convergence to a fixed point or detailed balance, more subtle constraints are also necessary. We will not go into the details of these additional conditions, since only sufficient (but perhaps not necessary) conditions are known, e.g. \cite{davies1974}. In other Sections, we will additionally take advantage of the known fact that Davies maps satisfy quantum detailed balance.
\subsection{Proof and statement of Lemma \ref{thorem:Li-Winter conjecture for Davies maps1}}\label{sec:main theorem}

In order to prove the main theorem we need a lemma about Davies maps first. 
We show that in the weak coupling limit, correlations between the system and the environment (the bath) are not created if both start as initially uncorrelated thermal states. In order to do this, we will need to introduce a finite dimensional cut-off on $\mathcal{H}_B$ and prove the results for the truncated space, and finally proving uniform convergence in the bath system size by removing the cut-off by taking the infinite dimensional limit.
Let $\hat P_n$ denote the projection onto a finite dimensional Hilbert Space $\mathcal{H}_{B,n}\subset \mathcal{H}_{B}$. Furthermore, assume that $\mathcal{H}_{B,1}\subset \mathcal{H}_{B,2}\subset \mathcal{H}_{B,3}\ldots$ and that $\lim_{n\rightarrow\infty}\mathcal{H}_{B,n}=\mathcal{H}_{B}$. For concreteness (although not strictly necessary), one could let $\hat P_n=\sum_{k=0}^n \ketbra{E_k}{E_k}$ where $\ket{E_0},\ket{E_1},\ket{E_2},\ldots$ are the eigenvectors of $\hat H_B$ ordered in increasing eigenvalue order.\\
We define the truncated self-adjoint Hamiltonians on $\mathcal{H_B}$ as $\hat H_B^{(n)}=\hat P_n \hat H_B \hat P_n$ with a corresponding Gibbs state denoted by $\tau_{B,n}\in\mathcal{S}(\mathcal{H}_{B,n})$. Similarly, we  construct unitaries on $\mathcal{H}_{B,n}$ by
\be\label{eq: unitary Ham SB}
U_n=\exp\left(-\mi \Delta \hat H_{SB}^{(n)}\right), \quad \hat H_{SB}^{(n)}= (\id_S\otimes\hat P_n)\, \hat H_{SB}\,(\id_S\otimes \hat P_n)
\ee
and define $\hat I_n:=(\id_S\otimes\hat P_n)\, \hat I\,(\id_S\otimes \hat P_n)$. We recall the definition of the thermal state of the system $\tau_S\in\mathcal{S}(\mathcal{H}_S)$, which is given by 
\be 
\tau_S=\frac{\me^{\beta \hat H_S}}{Z_S},\quad Z_S>0
\ee
for some inverse temperature $\beta>0$

The lemma is the following:
\begin{lemma}[Correlations at the fixed point]\label{lem:fixed pt almost comm}
	Let $\alpha> 0$, $\Delta\in \rr$ and the constant $\tilde Z_{SB}^{n,\alpha}=\tr[(\tau_S\otimes\tau_{B,n})^\alpha]$. Then, for all $n\in\nn^+$, we have the bound
	\be 
	\frac{1}{2}\|U_n(\tau_S\otimes\tau_{B,n})^\alpha U_n^\dag-(\tau_S\otimes\tau_{B,n})^\alpha\|_1\leq \tilde Z_{SB}^{n,\alpha}\beta\sqrt{\lambda \|\hat I_n\|},
	\ee 
	where $\tau_S$, $\tau_{B,n}$ are thermal states at inverse temperatures $\beta_S$, $\beta_{B,n}$ respectively, and $\|\cdot\|_1$, $\| \cdot\|$ is the one-norm and operator norm respectively.
\end{lemma}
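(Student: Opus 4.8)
The plan is to exploit the fact that, although $\sigma:=(\tau_S\otimes\tau_{B,n})^\alpha$ is not invariant under $U_n$, a closely related operator \emph{is} exactly invariant, and to thereby reduce the whole estimate to a static, perturbative comparison of two Gibbs states. Write the interaction-free truncated Hamiltonian $H_0:=\hat H_S\otimes\hat P_n+\id_S\otimes\hat H_B^{(n)}$, so that $\hat H_{SB}^{(n)}=H_0+\lambda\hat I_n$ and (taking the system and bath at the common inverse temperature $\beta$ that appears in the bound) $\tau_S\otimes\tau_{B,n}\propto\me^{-\beta H_0}$. Consequently $\sigma\propto\me^{-\alpha\beta H_0}$ commutes with $H_0$, and the \emph{only} reason it fails to commute with $\hat H_{SB}^{(n)}$, hence with $U_n=\me^{-\mi\Delta\hat H_{SB}^{(n)}}$, is the interaction term $\lambda\hat I_n$.

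First I would introduce the full truncated Gibbs state $\tau_{SB}^{(n)}\propto\me^{-\beta\hat H_{SB}^{(n)}}$. Since $(\tau_{SB}^{(n)})^\alpha$ is a function of $\hat H_{SB}^{(n)}$, it is left exactly invariant by the unitary, $U_n(\tau_{SB}^{(n)})^\alpha U_n^\dag=(\tau_{SB}^{(n)})^\alpha$. Inserting this invariant reference and combining unitary invariance of the trace norm with the triangle inequality gives the $\Delta$-independent bound
\be
\tfrac12\big\|U_n\sigma U_n^\dag-\sigma\big\|_1\le\big\|(\tau_S\otimes\tau_{B,n})^\alpha-(\tau_{SB}^{(n)})^\alpha\big\|_1 .
\ee
This is the crucial step: a naive Duhamel/commutator expansion of $U_n\sigma U_n^\dag-\sigma$ only produces a term growing like $\lambda|\Delta|\,\|[\hat I_n,\sigma]\|_1$, which is useless in the weak-coupling (Davies) limit, where $\Delta=\tilde t\to\infty$ at fixed $\tilde t\lambda^2$. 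Comparing instead to the exactly invariant state removes $\Delta$ entirely and leaves a purely static question: how close is the product of local Gibbs states to the global Gibbs state.

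Next I would bound the right-hand side by passing to normalised states and applying Pinsker's inequality, which is what manufactures the square root in $\sqrt{\lambda\|\hat I_n\|}$. Writing $\omega=\sigma/\tilde Z_{SB}^{n,\alpha}$ and $\nu=(\tau_{SB}^{(n)})^\alpha/\Tr[(\tau_{SB}^{(n)})^\alpha]$, both are Gibbs states at inverse temperature $\alpha\beta$, of $H_0$ and of $H_0+\lambda\hat I_n$ respectively. A direct computation gives $D(\omega\|\nu)=\alpha\beta\lambda\,\Tr[\omega\hat I_n]+\log\!\big(\Tr\me^{-\alpha\beta(H_0+\lambda\hat I_n)}/\Tr\me^{-\alpha\beta H_0}\big)$; the first term is bounded by $\alpha\beta\lambda\|\hat I_n\|$, and the log-partition-function ratio by Peierls--Bogoliubov (equivalently by monotonicity of $X\mapsto\Tr\me^{X}$ applied to $\pm\alpha\beta\lambda\|\hat I_n\|\,\id$). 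This shows the relative entropy is first order in the coupling, $D(\omega\|\nu)=O(\alpha\beta\lambda\|\hat I_n\|)$, so that Pinsker's inequality $\|\omega-\nu\|_1\le\sqrt{2D(\omega\|\nu)}$ yields the square root; restoring the normalisation factor $\tilde Z_{SB}^{n,\alpha}$ (and absorbing $|\Tr\sigma-\Tr(\tau_{SB}^{(n)})^\alpha|$, which the same Peierls--Bogoliubov estimates control) and tracking the constants gives a bound of the claimed form $\tilde Z_{SB}^{n,\alpha}\,\beta\sqrt{\lambda\|\hat I_n\|}$ for every finite $n$.

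The main obstacle is exactly the $\Delta$-independence: the estimate must survive $\Delta\to\infty$, so the argument cannot rely on the smallness of $\Delta\lambda$ and must instead be anchored to the exactly invariant Gibbs state $(\tau_{SB}^{(n)})^\alpha$. The remaining technical care lies in the relative-entropy and partition-function estimates with the non-commuting interaction $\hat I_n$ --- where the temperature factor originates, since the relative entropy of Gibbs states inherits the inverse temperature from the exponent --- and in the bookkeeping of normalisation constants producing the prefactor $\tilde Z_{SB}^{n,\alpha}$; these are routine once the reduction above is in place. Crucially, every constant so produced is independent of the cut-off $n$, which is what allows the bound to be taken uniformly to the infinite-dimensional limit in the subsequent step.
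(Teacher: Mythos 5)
Your proposal is correct, but it takes a genuinely different route from the paper's. The paper argues \emph{dynamically}: with $\sigma:=(\tau_S\otimes\tau_{B,n})^\alpha$ and $\tilde\tau:=U_n\sigma U_n^\dag/\tilde Z_{SB}^{n,\alpha}$, it evaluates $D(\tilde\tau\,\|\,\sigma/\tilde Z_{SB}^{n,\alpha})$ \emph{exactly} — unitary invariance of the von Neumann entropy cancels the entropy terms, and exact conservation of $\tr[\hat H_{SB}^{(n)}\,\cdot\,]$ under $U_n$ (valid for every $\Delta$) converts the difference of non-interacting mean energies into a difference of interaction energies, yielding the identity $D=\beta\lambda\,\tr[\hat I_n(\tilde\tau-\sigma/\tilde Z_{SB}^{n,\alpha})]\leq 2\beta\lambda\|\hat I_n\|$ — and then applies Pinsker. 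You instead pivot \emph{statically} on the exactly $U_n$-invariant operator $(\tau_{SB}^{(n)})^\alpha$, reduce via the triangle inequality to $\|\sigma-(\tau_{SB}^{(n)})^\alpha\|_1$, and bound that by a Peierls--Bogoliubov estimate on $D(\omega\|\nu)$ for the two Gibbs states at inverse temperature $\alpha\beta$, again finishing with Pinsker; both arguments are uniform in $\Delta$ and in $n$, and both place the perturbative smallness in a relative entropy that is first order in $\lambda\|\hat I_n\|$, so the $\sqrt{\lambda\|\hat I_n\|}$ arises identically. What the paper's route buys is an equality with no auxiliary state: there is no interacting Gibbs state, hence no normalization mismatch to absorb. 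What yours buys is transparency about \emph{why} the bound survives the Van Hove limit (the invariant interacting Gibbs state is the anchor, exactly as you say a Duhamel expansion would fail), at the cost of an extra $O(\lambda)$ remainder $|\tr\sigma-\tr(\tau_{SB}^{(n)})^\alpha|$ and a slightly worse constant; this is harmless, since the term is absorbable when $\alpha\beta\lambda\|\hat I_n\|\lesssim 1$ and the claimed inequality is trivial otherwise (the left-hand side never exceeds $\tilde Z_{SB}^{n,\alpha}$), and the paper's own constant bookkeeping is no tighter (its displayed $\beta\sqrt{\lambda\|\hat I_n\|}$ versus the $\sqrt{\alpha\beta\lambda\|\hat I_n\|}$ both proofs actually produce). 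One caveat to flag: your reduction requires $\tau_S\otimes\tau_{B,n}\propto \me^{-\beta H_0}$ with the \emph{same} $H_0$ appearing in $\hat H_{SB}^{(n)}$, i.e.\ equal system and bath temperatures; but the paper's energy-conservation step silently needs this too (its mention of distinct $\beta_S$, $\beta_{B,n}$ in the statement notwithstanding, the proof recombines $\Delta E_S+\Delta E_B$ with a single $\beta$), so this is a shared assumption rather than a gap in your argument.
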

\begin{proof}
	The result is a consequence of mean energy conservation under the unitary transformation $U_n$ and Pinsker's inequality.\\
	Define the shorthand notation $\tilde\tau^{n,\alpha}_{SB}=U_n(\tau_S\otimes\tau_{B,n})^{\alpha} U_n^\dag/\tilde Z_{SB}^{n,\alpha} \in\mathcal{S}(\mathcal{H}_S\otimes\mathcal{H}_{B,n})$ and $\tilde Z_{SB}^{n,\alpha}:=\tilde Z_S^\alpha \tilde Z_B^{n,\alpha}$, $\tilde Z_{S}^{\alpha}:= \tr[\tau_S^\alpha]$, $\tilde Z_{B}^{n,\alpha}:=\tr[\tau_{B,n}^\alpha]$. By direct evaluation of the relative entropy,
	\be 
	D(\tilde\tau_{SB}^{n,\alpha}\|(\tau_S\otimes\tau_{B,n})^\alpha/\tilde Z_{SB}^{n,\alpha})/\beta=\tr[\hat H_S \tilde\tau^{n,\alpha}_S]+\tr[\hat H_B^{(n)} \tilde\tau_{B,n}^\alpha]-(\alpha\beta)^{-1}S(\tau_S^\alpha\otimes\tau_{B,n}^\alpha/\tilde Z_{SB}^{n,\alpha})+\ln(\tilde Z_{SB}^{n,\alpha}),
	\ee
	where we have used unitary in-variance of the von Neumann entropy $S(\cdot)$.
	Thus since
	\begin{align} 
		0=& D((\tau_S\otimes\tau_{B,n})^\alpha/\tilde Z_{SB}^{n,\alpha}\|(\tau_S\otimes\tau_{B,n})^\alpha/\tilde Z_{SB}^{n,\alpha})/\beta\\
		=&\tr[\hat H_S \tau_S^\alpha/\tilde Z_{S}^\alpha]+\tr[\hat H_B^{(n)} \tau_{B,n}^\alpha/\tilde Z_{B}^{n,\alpha}]-(\alpha\beta)^{-1}S(\tau_S^\alpha\otimes\tau_{B,n}^\alpha/\tilde Z_{SB}^{n,\alpha})+\ln(\tilde Z_{SB}^{n,\alpha}),
	\end{align}
	we conclude
	\be\label{eq:relative D consequence}
	D(\tilde\tau_{SB}^{n,\alpha}\|(\tau_S\otimes\tau_{B,n})^\alpha/\tilde Z_{SB}^{n,\alpha})/\beta=\tr[\hat H_S \tilde \tau^{n,\alpha}_S]+\tr[\hat H_B^{(n)} \tilde\tau_B^{n,\alpha}]-\tr[\hat H_S \tau_S^\alpha/\tilde Z_{S}^{n,\alpha}]-\tr[\hat H_B^{(n)} \tau_{B,n}^\alpha/\tilde Z_{B}^{n,\alpha}].
	\ee
	Energy conservation implies 
	\be\label{eq:energy conservation}
	\tr[\hat H_{SB}^{(n)}(\tau_S\otimes\tau_{B,n})^\alpha/\tilde Z_{SB}^{n,\alpha}]=\tr[\hat H_{SB}^{(n)}\tilde\tau_{SB}^{n,\alpha}].
	\ee
	Combining Eqs. \eqref{eq:energy conservation}, \eqref{eq:relative D consequence} we achieve
	\be\label{eq:relative entropy trace rel}
	D(\tilde\tau_{SB}^{n,\alpha}\|(\tau_S\otimes\tau_{B,n})^\alpha/\tilde Z_{SB}^{n,\alpha})= \tr[\lambda \hat I_n(\tilde\tau_{SB}^{n,\alpha}-(\tau_S\otimes\tau_{B,n})^\alpha/\tilde Z_{SB}^{n,\alpha})]\beta.
	\ee
	Pinsker inequality states that for any two density matrices $\rho$, $\sigma$,
	\be
	D(\rho\|\sigma)\geq \frac{1}{2} \|\rho-\sigma\|_1^2.
	\ee
	It follows from it,
	and from Eq. \eqref{eq:relative entropy trace rel}, 
	\begin{align} 
		\|U_n(\tau_S\otimes\tau_{B,n})^\alpha U_n^\dag -(\tau_S\otimes\tau_{B,n})^\alpha\|_1 & \leq \tilde Z_{SB}^{n,\alpha}\beta\sqrt{2\,  \tr[\lambda\hat I_n(\tilde\tau_{SB}^{n,\alpha}-(\tau_S\otimes\tau_{B,n})^\alpha/\tilde Z_{SB}^{n,\alpha})] } 
		\,\,\\
		& \leq 2\tilde Z_{SB}^{n,\alpha}\beta\sqrt{\sup_{\rho\in\mathcal{S}(\mathcal{H}_S\otimes\mathcal{H}_{B,n})} \left |\tr[\hat I_n\,\rho]\right| \lambda }\\
		& \leq 2\tilde Z_{SB}^{n,\alpha} \beta \sqrt{\lambda \|\hat I_n\| }
	\end{align}
\end{proof}

This lemma may be of independent interest, as it makes explicit the idea mentioned in previous work such as \cite{cwiklinski2015limitations} of how Davies maps, in the weak coupling limit, can be taken as free operations in the resource theory of athermality \cite{brandao2013resource}. 

With it at hand, we can prove the central lemma.

\begin{lemma}[Lemma \ref{thorem:Li-Winter conjecture for Davies maps1} of main text]\label{thorem:Li-Winter conjecture for Davies maps2}
	Assume conditions in Section \ref{sec:Conditions for Theorem Li-winter conjec} hold. Then all maps $T_t(\cdot)$ satisfy the inequality
	\be\label{Eq:proof theorem 1 eq A}
	D((\cdot)\|\tau_S)-D(T_t(\cdot)\|\tau_S)\geq D\left((\cdot)\big{\|}\tilde T_t(T_t(\cdot))\right), \quad \forall \,t\geq 0
	\ee
	where $\tilde T_t(\cdot)$ is the Petz recovery map corresponding to $T_t(\cdot)$, 
	\begin{equation}
		\tilde{T}_t(\cdot)= \tau^{1/2}_S T^\dagger_t \left(\tau^{-1/2}_S (\cdot) \tau^{-1/2}_S\right) \tau^{1/2}_S,
	\end{equation}
	with $T_t^\dag$ denoting the adjoint of $T_t$.
\end{lemma}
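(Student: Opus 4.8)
The plan is to exploit the microscopic dilation behind the weak-coupling limit \eqref{eq:Davies map limit} together with Lemma \ref{lem:fixed pt almost comm}, reducing the claimed recoverability bound to the ordinary monotonicity of the relative entropy under a single, explicitly constructed channel. Throughout I would work first with the truncated bath $\mathcal{H}_{B,n}$ and the unitary $U_n$, carry the error estimate of Lemma \ref{lem:fixed pt almost comm}, and take $\lambda\to 0^+$ (with $\tilde t\lambda^2=t$ fixed) and then $n\to\infty$ only at the very end. The single most important input is the case $\alpha=1$ of Lemma \ref{lem:fixed pt almost comm}: it gives $\|U_n(\tau_S\otimes\tau_{B,n})U_n^\dagger-\tau_S\otimes\tau_{B,n}\|_1\to 0$, i.e. in the Davies limit the global evolution leaves the product thermal state invariant, $U(\tau_S\otimes\tau_B)U^\dagger=\tau_S\otimes\tau_B$, equivalently $[U,\tau_S\otimes\tau_B]=0$. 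This is the only place where properties 1) and 4) enter, and crucially no use of QDB is made.

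Next I would write $T_t=\tr_B\circ\Phi_U\circ\mathcal{A}$, where $\mathcal{A}(\cdot)=(\cdot)\otimes\tau_B$ appends the bath and $\Phi_U(\cdot)=U(\cdot)U^\dagger$, and set $\rho_{SB}^{(t)}=U(\rho_S\otimes\tau_B)U^\dagger$, so that $\rho_S(t):=T_t(\rho_S)=\tr_B\rho_{SB}^{(t)}$. Two elementary bookkeeping facts for the relative entropy then do most of the work. First, invariance under unitaries and under tensoring a common factor, combined with the invariance of the reference from the previous paragraph, gives
\be
D(\rho_S\|\tau_S)=D(\rho_S\otimes\tau_B\|\tau_S\otimes\tau_B)=D\big(\rho_{SB}^{(t)}\big\|\tau_S\otimes\tau_B\big).
\ee
Second, the exact chain rule for a product reference, $D(\rho_{SB}\|\sigma_S\otimes\sigma_B)=D(\rho_S\|\sigma_S)+D(\rho_{SB}\|\rho_S\otimes\sigma_B)$, which is checked directly by expanding $\log(\sigma_S\otimes\sigma_B)=\log\sigma_S\otimes\id+\id\otimes\log\sigma_B$, applied with $\sigma_S\otimes\sigma_B=\tau_S\otimes\tau_B$ yields
\be
D(\rho_S\|\tau_S)-D(\rho_S(t)\|\tau_S)=D\big(\rho_{SB}^{(t)}\big\|\rho_S(t)\otimes\tau_B\big).
\ee

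It then remains to lower-bound the right-hand side by $D(\rho_S\|\tilde T_t(\rho_S(t)))$ using a single application of data processing. The natural recovery channel is $\mathcal{R}=\tilde{\mathcal{A}}\circ\Phi_{U^\dagger}$, where $\Phi_{U^\dagger}(\cdot)=U^\dagger(\cdot)U$ and $\tilde{\mathcal{A}}$ is the Petz map of the append channel $\mathcal{A}$ with respect to $\tau_S$; both are CPTP since $\tau_S$ is full rank. On the one hand $\mathcal{R}(\rho_{SB}^{(t)})=\tilde{\mathcal{A}}(\rho_S\otimes\tau_B)=\rho_S$, which needs only $U^\dagger U=\id$. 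On the other hand, using $[U,\tau_S\otimes\tau_B]=0$ to commute the powers $\tau_S^{\pm1/2}\otimes\tau_B^{\pm1/2}$ through $U$ and $U^\dagger$, a direct computation collapses $\mathcal{R}(\rho_S(t)\otimes\tau_B)=\tilde{\mathcal{A}}\big(U^\dagger(\rho_S(t)\otimes\tau_B)U\big)$ to $\tau_S^{1/2}T_t^\dagger(\tau_S^{-1/2}\rho_S(t)\tau_S^{-1/2})\tau_S^{1/2}=\tilde T_t(\rho_S(t))$; equivalently, this expresses the fact that the Petz maps of $\tr_B$, $\Phi_U$ and $\mathcal{A}$ compose, in reverse order, to the Petz map of $T_t$. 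Monotonicity of $D$ under $\mathcal{R}$ then gives $D(\rho_{SB}^{(t)}\|\rho_S(t)\otimes\tau_B)\geq D(\rho_S\|\tilde T_t(\rho_S(t)))$, and chaining this with the previous display finishes the argument.

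The main obstacle is entirely in the limiting procedure rather than in the algebra: the commutation $[U,\tau_S\otimes\tau_B]=0$, and hence the exact identities above and the exact form of $\mathcal{R}$, hold only in the Davies limit, so for finite $\lambda$ and finite $n$ each equality must be replaced by an approximate one with error controlled by the $\sqrt{\lambda\|\hat I_n\|}$ bound of Lemma \ref{lem:fixed pt almost comm}. I would therefore establish the inequality for the truncated, finite-$\lambda$ objects up to a vanishing error, invoking continuity of the relative entropy in finite dimension (the references $\tau_S\otimes\tau_{B,n}$ and $\rho_S(t)\otimes\tau_{B,n}$ are full rank on the truncated space), and only then remove the cut-off, where the uniform convergence in bath size asserted in the setup of Section \ref{sec:Conditions for Theorem Li-winter conjec} is what allows the passage back to $\mathcal{H}_B$.
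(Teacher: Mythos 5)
Your proposal is, at its core, the same proof as the paper's: the same dilation of $T_t$, the same key input (Lemma \ref{lem:fixed pt almost comm}), the same recovery channel $\tr_B[U^\dagger(\cdot)U]$ fed into a single data-processing step (compare Eq.~\eqref{eq:pre eq theorem proof}), and the same identification of that channel with the Petz map \eqref{eq: T tilde def} in the Van Hove limit. The one packaging difference is pleasant but minor: where the paper keeps the perturbation $\sqrt{\lambda}\hat B_n(\lambda)$ exact and absorbs the entropy difference into the sandwich identity of Eqs.~\eqref{eq:firt line D -D =D}--\eqref{eq:3 theorem proof}, you replace the evolved reference $U_n(\tau_S\otimes\tau_{B,n})U_n^\dagger$ by the product $\tau_S\otimes\tau_{B,n}$ up to error and then use the exact chain rule $D(\rho_{SB}\|\sigma_S\otimes\sigma_B)=D(\rho_S\|\sigma_S)+D(\rho_{SB}\|\rho_S\otimes\sigma_B)$, which avoids the operator-logarithm manipulations. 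The price is a continuity estimate for $D$ in its second argument, whose modulus degrades like $\log \lambda_{\min}(\tau_S\otimes\tau_{B,n})^{-1}$ and hence with $n$; the paper sidesteps this entirely by never replacing the reference state.

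Two concrete points need repair. First, your order of limits is the reverse of the definition: the Davies map \eqref{eq:Davies map limit} is defined on the full bath, and the paper's proof correspondingly takes $n\to\infty$ first (using the $\lambda$-uniform trace-class bound \eqref{eq:B B pre lambda lim}) and only then $\lambda\to 0^+$ with $\tilde t\lambda^2=t$ fixed. The Remark following the paper's proof explicitly notes that your order --- $\lambda\to 0^+$ at fixed $n$, then $n\to\infty$ --- yields the lemma for the a priori \emph{different} maps \eqref{eq:reverse D map}--\eqref{eq:reverse tilde D map}, and whether the two orders of limits commute is left there as an open question. So as written you prove the inequality for the reversed-limit maps, not for $T_t$ as defined; to fix it, swap the limits, which your estimates tolerate since Lemma \ref{lem:fixed pt almost comm} is uniform in $n$ (the truncated partition functions converge and $\|\hat I_n\|\to\|\hat I\|$). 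Second, the $\alpha=1$ case of Lemma \ref{lem:fixed pt almost comm} is not by itself ``the single most important input'': to collapse $\mathcal{R}(\rho_S(t)\otimes\tau_B)$ to $\tilde T_t(\rho_S(t))$ you must commute $(\tau_S\otimes\tau_B)^{1/2}$ --- not the state itself --- through $U$ with quantitative control at finite $\lambda$, and trace-norm closeness of $U(\tau_S\otimes\tau_B)U^\dagger$ to $\tau_S\otimes\tau_B$ does not bound the square-root defect at the same rate (square roots are only $1/2$-H\"older; a Powers--St{\o}rmer argument would give a $\lambda^{1/4}$ Hilbert--Schmidt bound, which still vanishes but must be argued). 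This is precisely why Lemma \ref{lem:fixed pt almost comm} is proved for all $\alpha>0$ and is invoked by the paper at $\alpha=1/2$, via the operator $\tilde B(\lambda)$; cite it there and your Petz-map identification closes.
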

\begin{proof}
	Had there been no interaction term (i.e. $\lambda=0$) and the bath been finite dimensional, the proof of this Lemma would have been straightforward, using the techniques developed in \cite{wehner2015work} involving simple manipulations of the relative entropy, and the data processing inequality for finite dimensional baths. The added difficulty here will be in proving monotone convergence as the bath Hilbert space tends to infinity. To achieve this, we will use Lemma \ref{lem:fixed pt almost comm} and continuity arguments. We will perform the calculations for the map $ \tr_B\left[ \me^{-\mi \tilde t \hat H_{SB}} (\cdot)\otimes\rho_B\, \me^{\mi \tilde t \hat H_{SB}}  \right]$ rather than $T_t(\cdot)$ itself. We will finally take the limit described in Eq. \eqref{eq:Davies map limit} to conclude the proof.\\
	Noting that the relative entropy between two copies is zero, followed by using its additivity and unitarity invariance properties, we find for $\rho_S\in\mathcal{S}(\mathcal{H}_S)$,
	\begin{align}
		D(\rho_S\|\tau_S)&=D(\rho_S\otimes\tau_{B,n}\|\tau_S\otimes\tau_{B,n})=D(U_n\rho_S\otimes\tau_{B,n} U^\dag_n\|U_n\tau_S\otimes\tau_{B,n} U^\dag_n)\\
		&= D(U_n\rho_S\otimes\tau_{B,n} U^\dag_n\|\tau_S\otimes\tau_{B,n} +\sqrt{\lambda}\hat B_n(\lambda))\label{eq:1 theorem proof}
	\end{align}
	where
	$\hat B_n(\lambda):=\left(U_n\tau_S\otimes\tau_{B,n} U^\dag_n-\tau_S\otimes\tau_{B,n} \right)/\sqrt{\lambda}$. 
	
	With the identity $D(\gamma_{CD}\|\zeta_{CD})-D(\gamma_{D}\|\zeta_{D})=D\left(\gamma_{CD}\|\exp(\ln\zeta_{CD}+\ln\id_C\otimes\zeta_D-\ln\id_C\otimes\zeta_D)\right)$ for bipartite states $\gamma_{CD}$, $\zeta_{CD}$, we have that
	\begin{align} 
		&D(U_n\rho_S\otimes\tau_{B,n} U^\dag_n\|\tau_S\otimes\tau_{B,n} +\sqrt{\lambda}\hat B_n(\lambda)) -D(\sigma_S\|\tau_S +\sqrt{\lambda}\tr_{B,n}[\hat B_n(\lambda)])\label{eq:firt line D -D =D}\\
		&=D\Bigg(U_n\rho_S\otimes\tau_{B,n} U^\dag_n\Big{ \|}\exp\Big(\ln (\tau_S\otimes\tau_{B,n}+\sqrt{\lambda}\hat B_n(\lambda))+\ln\sigma_S\otimes\id_{B,n}-\ln(\tau_S+\sqrt{\lambda}\tr_{B,n}[\hat B_n(\lambda)])\otimes\id_{B,n}\Big)\Bigg)\\
		&\geq D\Bigg(\rho_S\Big{ \|}\tr_{B,n}\bigg[U^\dag_n\exp\Big(\ln (\tau_S\otimes\tau_{B,n}+\sqrt{\lambda}\hat B_n(\lambda))+\ln\sigma_S\otimes\id_{B,n}-\ln(\tau_S+\sqrt{\lambda}\tr_{B,n}[\hat B_n(\lambda)])\otimes\id_{B,n}\Big)U_n\bigg]\Bigg)
		\label{eq:3 theorem proof}
	\end{align}
	where $\sigma_{S,n}:=\tr_{B,n}[U_n\rho_S\otimes\tau_{B,n} U^\dag_n]$ and in the last line we have used the unitarity invariance of the relative entropy followed by the data processing inequality.
	Plugging Eq. \eqref{eq:1 theorem proof} into Eq. \eqref{eq:3 theorem proof} followed by taking the $n\rightarrow\infty$ limit, we obtain 
	\begin{align}
		& D(\rho_S\|\tau_S) -D(\sigma_{S}\|\tau_S +\sqrt{\lambda}\tr_{B}[\hat B(\lambda)])\\
		& \geq D\Bigg(\rho_S\Big{ \|}\tr_{B}\bigg[U^\dag\exp\Big(\ln (\tau_S\otimes\tau_{B}+\sqrt{\lambda}\hat B(\lambda))+\ln\sigma_{S}\otimes\id_{B}-\ln(\tau_S+\sqrt{\lambda}\tr_{B}[\hat B(\lambda)])\otimes\id_{B}\Big)U\bigg]\Bigg),\label{eq:D-D>D after nlim} 
	\end{align}
	where we have defined $\hat B(\lambda):=\lim_{n\rightarrow\infty} \hat B_n(\lambda)$, $\sigma_S:=\lim_{n\rightarrow \infty}\sigma_{S,n}$. Before continuing, we will first note the validity of Eq. \eqref{eq:D-D>D after nlim}. 
	We start by showing that $\hat B(\lambda)$ is trace class for $\lambda\in[0,1]$. From Lemma \ref{lem:fixed pt almost comm} it follows
	\be 
	\|\hat B_n(\lambda)\|_1\leq 2\tilde Z_{SB}^{n,1}\beta\sqrt{\|\hat I_n\|},
	\ee
	for all $\lambda\in[0,1]$ with the r.h.s. $\lambda$ independent. By definition of $\tilde Z_{SB}^{n,\alpha}$ , it follows that it is the partition function of a tensor product of thermal states on $\mathcal S (\mathcal{H}_S\otimes\mathcal{H}_{B,n})$ at inverse temperatures $\alpha \beta_S,\alpha\beta$. Since the Hamiltonians $\hat H_{B,1},\hat H_{B,2},\hat H_{B,3},\ldots,\hat H_{B}$ by definition have well defined thermal states (finite partition functions) for all positive temperatures, it follows that $\lim_{n\rightarrow \infty} \tilde Z_{SB}^{n,\alpha}<\infty$ for all $\alpha>0$. Thus noting that by definition, $\lim_{n\rightarrow \infty} \|\hat I_n\|=\|\hat I\|$ and that $\hat I$, is a bounded operator, it follows that
	\be\label{eq:B B pre lambda lim}
	\|\hat B(\lambda)\|_1=\lim_{n\rightarrow \infty}\|\hat B_n(\lambda)\|_1=  2\tilde Z_{SB}^{\infty,1}\beta\sqrt{\|\hat I\|} <\infty,
	\ee
	Thus since $\tau_S+\sqrt{\lambda}\tr_B[\hat B(\lambda)])$ is finite dimensional and  Hermitian, and the eigenvalues of finite dimensional Hermitian matrices are continuous in their entries \cite{rellich,kato}
	, it follows, since $\tau_S$ has full support, that there exists $0<\lambda^*\leq 1$ such that for all $\lambda\in[0,\lambda^*]$, $\tau_S +\sqrt{\lambda}\tr_{B}[\hat B(\lambda)]$ has full support. Thus for all $\lambda\in[0,\lambda^*]$, the r.h.s. of Eq. \eqref{eq:D-D>D after nlim} is upper bounded by a finite quantity uniformly in $n\rightarrow\infty$ and thus since relative entropies are non-negative by definition, Eq. \eqref{eq:D-D>D after nlim}  is well defined for all $\lambda\in[0,\lambda^*]$.
	
	We now set $\Delta$ appearing in $U$ to $\Delta=t/\lambda^2$ followed by taking the limit $\lambda\rightarrow 0^+$ while keeping $t$ fixed   in Eq.  \eqref{eq:D-D>D after nlim}  thus achieving
	\be\label{eq:pre eq theorem proof}
	D(\rho_S\|\tau_S) -D(T_t(\rho_S)\|\tau_S)\geq D\Big(\rho_S\Big{ \|}\tr_{B}\big[U^\dag T_t(\rho_S)\otimes\tau_{B}U\big]\Big),
	\ee
	where we have used that by definition, $T_t(\cdot)=\lim_{\lambda\rightarrow 0^+} \tr_B[U(\cdot)\otimes\tau_B U^\dag].$

	We now proceed to calculate the Petz's recovery map for the map $T_t(\cdot)$. The adjoint map is $\tr_{B}[\tau_{B}^{1/2}U^\dag((\cdot)\otimes\id_{B})U\tau_{B}^{1/2}]$. Hence from the definition in Eq.\eqref{eq:Petz1} it follows that the Petz recovery map for $T_t(\cdot)$ is
	\be\label{eq: T tilde def}
	\tilde T_t(\cdot):=\tau_S^{1/2}\tr_{B}\left[\tau_{B}^{1/2}U^\dag\left(\tau_{S}^{-1/2}(\cdot)\tau_{S}^{-1/2}\otimes\id_{B}\right)U\tau_{B}^{1/2}\right]\tau_S^{1/2}.
	\ee
	Similarly to before, we define a traceless, self adjoint operator $\tilde B=\tilde B(\lambda):=\left( U\tau_S^{1/2}\otimes\tau_{B}^{1/2}U^\dag-\tau_S^{1/2}\otimes\tau_{B}^{1/2}\right)/\sqrt{\lambda}$. In analogy with the reasoning which led to Eq. \eqref{eq:B B pre lambda lim}, it follows from Lemma \ref{lem:fixed pt almost comm} that $\|\tilde B(\lambda)\|_1=\lim_{n\rightarrow \infty}\|\tilde B_n(\lambda)\|_1=  2\tilde Z_{SB}^{\infty,1/2}\beta\sqrt{\|\hat I\|} <\infty,$ for all $\lambda\in[0,1]$. For general $U=\exp(-\mi \Delta \hat H_{SB})$, we can now write
	\begin{align}
		&\tau_S^{1/2}\tr_{B}\left[\tau_{B}^{1/2}U^\dag\left(\tau_{S}^{-1/2}(\cdot)\tau_{S}^{-1/2}\otimes\id_{B}\right)U\tau_{B}^{1/2}\right]\tau_S^{1/2}\\
		&=\tr_{B}\left[\left(U^\dag\tau_S^{1/2}\otimes\tau_{B}^{1/2}+\sqrt{\lambda} U^\dag\tilde B\right)\left(\tau_{S}^{-1/2}(\cdot)\tau_{S}^{-1/2}\otimes\id_{B}\right)\left( \tau^{1/2}_{S}\otimes\tau_B^{1/2} U+\sqrt{\lambda}\tilde B U\right)\right]\\
		&=\tr_{B}[U^\dag\left((\cdot)\otimes\tau_{B}\right) U]+\sqrt{\lambda}\hat g_1(\cdot)+\lambda\hat g_2(\cdot)\in\mathcal{S}(\mathcal{H}_B),\label{eq:tilde M in proof}
	\end{align}
	where
	\begin{align}\label{eq:tilde T as U dag U}
		\hat g_1(\cdot) &= \tr_{B}\left[U^\dag \tilde B(\tau_S^{-1/2}(\cdot)\otimes\tau_{B}^{1/2})U \right] +\tr_{B}\left[U^\dag ((\cdot)\tau_S^{-1/2}\otimes\tau_{B}^{1/2})\tilde B U \right]  \\
		\hat g_2(\cdot) &= \tr_{B}\left[ U^\dag\tilde B\left(   \tau_S^{-1/2}(\cdot)\tau_S^{-1/2} \otimes \id_{B}\right)\tilde B U  \right],
	\end{align}
	which are well defined since they are comprised of products of bounded operators. Similarly to before, in Eq. \eqref{eq:tilde M in proof} we now set $\Delta$ appearing in $U$ to $\Delta=t/\lambda^2$ followed by taking the limit $\lambda\rightarrow 0^+$ while keeping $t$ fixed achieving
	\be 
	\tilde T_t(\cdot)\\
	=\tr_{B}[U^\dag\left((\cdot)\otimes\tau_{B}\right) U]
	\ee
	where we have used Eq. \eqref{eq: T tilde def}. Hence substituting Eq. \eqref{eq:tilde T as U dag U} in to Eq. \eqref{eq:pre eq theorem proof} and noting the equations holds for all states $\rho_S\in\mathcal{S}(\mathcal{H}_B)$, we conclude the proof.
\end{proof}
\begin{remark}
	In the above proof, we have taken two independent limits, namely 1st the infinite bath volume limit ($n\rightarrow\infty$) followed by the Van Hove limit ($\lambda\rightarrow 0^+$ while keeping $t$ fixed). This is the order in which Davies performed the limits  \cite{davies1974,davies1979generators} when defining the Davies map. From physical reasoning, one would expect the Davies map to be equally valid if the order of the limits is reversed. We note that the proof of Theorem \ref{thorem:Li-Winter conjecture for Davies maps} follows also if the order of the these two limits is reversed but now with the new definitions
	\begin{align}
		T_t(\cdot)&=\lim_{n\rightarrow\infty} \lim_{\lambda\rightarrow 0^+} \tr_{B,n}\left[ U_n(\tilde t) (\cdot)\otimes\tau_{B,n}\, U_n^\dag(\tilde t)  \right]\in\mathcal{S}(\mathcal{H}_S) \quad \text{subject to } \tilde t \lambda^2=t  \text{ fixed.}\label{eq:reverse D map}\\
		\tilde T_t(\cdot)&=\lim_{n\rightarrow\infty} \lim_{\lambda\rightarrow 0^+} \tau_S^{1/2}\tr_{B,n}\left[\tau_{B,n}^{1/2}U_n^\dag(\tilde t)\left(\tau_{S}^{-1/2}(\cdot)\tau_{S}^{-1/2}\otimes\id_{B,n}\right)U_n(\tilde t)\tau_{B,n}^{1/2}\right]\tau_S^{1/2}.\in\mathcal{S}(\mathcal{H}_S) \quad \text{subject to } \tilde t \lambda^2=t  \text{ fixed.}\label{eq:reverse tilde D map}
	\end{align}
	An interesting technical question is whether the above limits commute i.e. whether Eqs. \eqref{eq:reverse D map}, \eqref{eq:reverse tilde D map} are identical to Eqs. \eqref{eq:Davies map limit}, \eqref{eq: T tilde def}.
\end{remark}

\subsection{Quantum detailed balance and Petz recovery map}\label{sec:QDB}

Now we show that all Davies maps have the peculiar property that they are the same as their Petz recovery map. This is because of a crucial property satisfied by their generators: \emph{quantum detailed balance}. For Theorem \ref{thorem:Li-Winter conjecture for Davies maps2} in the main text to hold, we require both the conditions of Section \ref{sec:Conditions for Theorem Li-winter conjec} and the following Lemma to hold. For the sake of generality, we show the that the results is true for any fixed point $\Omega$ with full support. We remind the reader, that a dynamical semigroup $M_t(\cdot)$ is a one parametre family of CPTP maps with a generator consisting of a unitary part $	\theta(\cdot)=-[\hat H_\text{eff},\cdot],$ and a dissipative part called a Lindbladian, $\mathcal{L}(\cdot)$, such that all together we have
\be 
M_t(\cdot)=\me^{t\mi\theta+t\mathcal{L}}(\cdot)
\ee.

\begin{theorem}[Dissipative Recovery map]\label{lem:qdb}
	A quantum dynamical semigroup $M_t(\cdot)$ with no unitary part, $\theta=0$, and Lindbladian $\mathcal{L}$ satisfying quantum detailed balance (Eq. \eqref{eq:QDB}) for the state $\Omega$ with full rank, is equal to its corresponding Petz recovery map, namely,
	\begin{equation}
		{ M}_t(\cdot)=\tilde{ M}_t(\cdot),
	\end{equation}
	where
	\begin{equation}
		\tilde{ M}_t(\cdot) =  \Omega^{1/2} M_t^\dagger(M_t(\Omega)^{-1/2} \cdot M_t(\Omega)^{-1/2}) \Omega^{1/2}.
	\end{equation}
	
\end{theorem}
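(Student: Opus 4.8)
The plan is to compress the detailed-balance hypothesis into a single superoperator identity and then observe that, in the present $\theta=0$ setting, the Petz map is merely a conjugation of $M_t^\dagger$ that exactly undoes that identity. Throughout write $\Phi(X):=\Omega^{1/2}X\Omega^{1/2}$ for the linear superoperator associated with $\Omega$; it is invertible because $\Omega$ has full rank, with inverse $\Phi^{-1}(Y)=\Omega^{-1/2}Y\Omega^{-1/2}$. Two preliminary reductions are worth isolating. First, since QDB entails $\mathcal{L}(\Omega)=0$, the fixed point satisfies $M_t(\Omega)=\Omega$, so $M_t(\Omega)^{-1/2}=\Omega^{-1/2}$ and the recovery map collapses to $\tilde M_t(\cdot)=\Phi\big(M_t^\dagger(\Phi^{-1}(\cdot))\big)$. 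Second, in terms of the Hilbert--Schmidt product $\langle X,Y\rangle_{\mathrm{HS}}=\tr[X^\dagger Y]$ one has $\langle A,B\rangle_{\Omega}=\langle \Phi(A),B\rangle_{\mathrm{HS}}$, and $\Phi$ is HS-self-adjoint.

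First I would translate the QDB hypothesis of Eq.~\eqref{eq:QDB} into superoperator language. Inserting $\langle A,B\rangle_\Omega=\langle\Phi(A),B\rangle_{\mathrm{HS}}$ into both sides of $\langle A,\mathcal{L}^\dagger(B)\rangle_\Omega=\langle\mathcal{L}^\dagger(A),B\rangle_\Omega$, and moving $\mathcal{L}^\dagger$ across the HS product via its HS-adjoint $\mathcal{L}$, the left side becomes $\langle \mathcal{L}(\Phi(A)),B\rangle_{\mathrm{HS}}$ while the right side becomes $\langle \Phi(\mathcal{L}^\dagger(A)),B\rangle_{\mathrm{HS}}$. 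Since this holds for all $A,B$, QDB is equivalent to the operator identity $\mathcal{L}\,\Phi=\Phi\,\mathcal{L}^\dagger$, equivalently $\mathcal{L}=\Phi\,\mathcal{L}^\dagger\,\Phi^{-1}$.

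Next I would assemble the exponentials. Since $\theta=0$ we have $M_t=\me^{t\mathcal{L}}$ and hence $M_t^\dagger=\me^{t\mathcal{L}^\dagger}$. Feeding this into the collapsed form of the recovery map and using that conjugation commutes with the exponential gives $\tilde M_t=\Phi\circ\me^{t\mathcal{L}^\dagger}\circ\Phi^{-1}=\me^{t\,\Phi\,\mathcal{L}^\dagger\,\Phi^{-1}}$. Substituting the identity from the previous step, $\Phi\,\mathcal{L}^\dagger\,\Phi^{-1}=\mathcal{L}$, yields $\tilde M_t=\me^{t\mathcal{L}}=M_t$, which is the claim.

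The genuinely delicate step is the translation of QDB, where bookkeeping of adjoints is easy to get wrong: one must track that $\langle\cdot,\cdot\rangle_\Omega$ is conjugate-linear in its first slot, that $\Phi$ is HS-self-adjoint (so it may be moved either way across $\langle\cdot,\cdot\rangle_{\mathrm{HS}}$), and that the HS-adjoint of $\mathcal{L}^\dagger$ is $\mathcal{L}$ itself. Everything else is routine: the simplification of the Petz map uses only that $\Omega$ is the fixed point, and the exponential-conjugation identity is standard for the finite-dimensional superoperators involved. The full-rank assumption on $\Omega$ is precisely what guarantees $\Phi^{-1}$ exists and renders all the manipulations well defined.
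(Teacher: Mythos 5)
Your proof is correct and takes essentially the same route as the paper's: you reduce the Petz map via the fixed point $M_t(\Omega)=\Omega$, convert QDB into the intertwining identity $\mathcal{L}\circ\Phi=\Phi\circ\mathcal{L}^\dagger$ (equivalently $\mathcal{L}(\cdot)=\Omega^{1/2}\mathcal{L}^\dagger(\Omega^{-1/2}\cdot\,\Omega^{-1/2})\Omega^{1/2}$, which the paper imports by citation and you derive directly), and transport it through the exponential. The paper's term-by-term argument on the power series of $e^{t\mathcal{L}^\dagger}$ is exactly your observation that conjugation by $\Phi$ commutes with exponentiation, so the two proofs coincide in substance.
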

\begin{proof}
	
	The property of quantum detailed balance (also sometimes referred to as the reversibility, or KMS condition) reads 
	\begin{equation}\label{eq:qdb}
		\langle A, \mathcal{L}^\dagger(B) \rangle_{\Omega} = \langle \mathcal{L}^\dagger(A), B \rangle_{\Omega}
	\end{equation}
	for all $A,B\in\cc^{d_S\times d_S}$, where $\mathcal{L}^\dagger$ is the adjoint Lindbladian, and we define the scalar product
	
	\begin{equation} \label{eq:scalarp}
		\langle A, B \rangle_{\Omega} := \text{Tr}[\Omega^{1/2} A^\dagger \Omega^{1/2} B].
	\end{equation}
	
	Because Eq. \eqref{eq:qdb} holds for all $A,B\in\cc^{d_S\times d_S}$, Eq. \eqref{eq:qdb} implies that \cite{ohya2004quantum}
	
	\begin{equation}\label{eq:qdbid}
		\mathcal{L}(\cdot)=\Omega^{1/2}  \mathcal{L}^\dagger (\Omega^{-1/2} \cdot \Omega^{-1/2}) \Omega^{1/2}.
	\end{equation}
	
	Eq. \eqref{eq:qdb} automatically implies that any power of the generator also obeys the same relation, that is, $\forall \,n \in \nn^+$
	
	\begin{align} 
		\langle A, \mathcal{L}^{\dagger n}(B) \rangle_{\Omega} &= \langle A, \Omega^{-1/2}  \mathcal{L} (\Omega^{1/2}\ldots \Omega^{-1/2}  \mathcal{L} (\Omega^{1/2} B \Omega^{1/2}) \Omega^{-1/2} \ldots\Omega^{1/2}) \Omega^{-1/2} \rangle_{\Omega} 
		\\
		&=\langle A, \Omega^{-1/2}  \mathcal{L}^{ n} (\Omega^{1/2} B \Omega^{1/2}) \Omega^{-1/2} \rangle_{\Omega} 
		\\
		&= \langle \mathcal{L}^{\dagger \,n}(A), B \rangle_{\Omega},\label{eq:qdbn}
	\end{align}
	where in the first line we use Eq. \eqref{eq:qdbid} $n$ times and the 2nd line follows from the definition of the adjoint map. Hence we can also write 
	\begin{equation}\label{eq:qdbidn}
		\mathcal{L}^n(\cdot)=\Omega^{1/2}  \mathcal{L}^{\dagger \,n} (\Omega^{-1/2} \cdot \Omega^{-1/2}) \Omega^{1/2}.
	\end{equation}

	The semigroup can be written as $M_t(\cdot)=e^{\mathcal{L}t}(\cdot)$. Its adjoint semigroup is given by $e^{\mathcal{L}^\dagger t}$ and hence the Petz recovery map is (see Eq.\eqref{eq:Petz1})
	\begin{equation}\label{eq:pets dyamic recov map}
		\tilde{ M}_t(\cdot) =  \Omega^{1/2} e^{\mathcal{L}^\dagger t}(\Omega^{-1/2} \cdot \Omega^{-1/2}) \Omega^{1/2}.
	\end{equation}
	Since, $\tilde M_t(\cdot)=\Omega^{1/2}\left(\,\sum_{n=0}^\infty (t\mathcal{L})^{\dag n}(\Omega^{-1/2} \cdot \Omega^{-1/2})/(n!) \,\right)\Omega^{1/2}$, Eq. \eqref{eq:pets dyamic recov map} together with Eq. \eqref{eq:qdbidn}, means that $\tilde{M}_t(\cdot)=M_t (\cdot)$.
	
	We note that the Petz recovery map is defined in terms of a map $\Gamma(\cdot)$ and a state $\sigma_S$ as the unique solution to 
	
	\begin{equation}\label{eq:Petz1}
		\langle A,\Gamma^\dagger(B) \rangle_{\sigma_S} = \langle \tilde \Gamma^\dagger (A), B \rangle_{\Gamma (\sigma_S)}
	\end{equation}
	for all $A,B\in\cc^{d_S\times d_S}$ and the scalar product is given by Eq. \eqref{eq:scalarp}. The solution takes the form \cite{ohya2004quantum}
	
	\begin{equation}
		\tilde{ T}(\cdot) =  \sigma_S^{1/2} T^\dagger(T(\sigma_S)^{-1/2} \cdot T(\sigma_S)^{-1/2}) \sigma_S^{1/2},
	\end{equation}
	such that we always have that $\tilde{T}(T(\sigma_S))=\sigma_S$. Here this simplifies by choosing $\sigma_S=\Omega$ a fixed point of $M_t(\cdot)$.
\end{proof}

When the generator is time-independent and $\theta=0$, we thus have from Theorem \ref{lem:qdb} that the combination of a map for a time $t$ and its recovery map is equivalent to applying the map for a time $2t$. That is $\tilde M_t(M_t(\cdot))=M_{2t}(\cdot)$.
This means we can write Eq. \eqref{Eq:proof theorem 1 eq A} in a particularly simple form.\\
The following Lemma builds on Theorem \ref{lem:qdb} to extend it to the case in which the dynamical semigroup also includes a unitary part.
\begin{lemma}[Dissipative and unitary Recovery map]\label{lem:qdb unitaary}
	Let $M_t(\cdot)$ be a quantum dynamical semigroup with unitary part  $\theta$ and Lindbladian $\mathcal{L}$ which: 1) satisfying quantum detailed balance (Eq. \eqref{eq:QDB}) for the state $\Omega$ with full rank and 2) commute, $\theta(\mathcal{L} (\cdot)) =\mathcal{L} (\theta(\cdot))$. Then, $M_t(\cdot)$ has a Petz recovery map $\tilde M(\cdot)$ which is a dynamical semigroup with unitary part $-\theta$ and Lindbladian $\mathcal{L}$. Namely, if 
	\begin{equation}
		{ M}_t(\cdot)=\me^{t \mi \theta+t\mathcal{L}}(\cdot),
	\end{equation}
	satysfying 1) and 2), then
	\begin{equation}
		\tilde{ M}_t(\cdot) 
		=\me^{-t \mi \theta+t\mathcal{L}}(\cdot).
	\end{equation}
\end{lemma}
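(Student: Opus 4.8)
The plan is to reduce the computation to the purely dissipative case already settled in Theorem \ref{lem:qdb}, peeling off the unitary part with the help of hypothesis 2). First I would record that $\Omega$ is a fixed point of the full semigroup: the scalar-product form of QDB gives $\mathcal{L}(\Omega)=0$, while the second line of \eqref{eq:QDB}, $[\hat H_{\text{eff}},\Omega]=0$, gives $\theta(\Omega)=-[\hat H_{\text{eff}},\Omega]=0$, so that $M_t(\Omega)=\Omega$. This is precisely what makes the Petz map collapse to the convenient form $\tilde M_t(\cdot)=\Omega^{1/2}M_t^\dagger\big(\Omega^{-1/2}(\cdot)\Omega^{-1/2}\big)\Omega^{1/2}$, i.e. the conjugation of $M_t^\dagger$ by the invertible superoperator $\mathcal{A}(\cdot):=\Omega^{1/2}(\cdot)\Omega^{1/2}$ with inverse $\mathcal{A}^{-1}(\cdot)=\Omega^{-1/2}(\cdot)\Omega^{-1/2}$.

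Next I would factor the semigroup. Since $\theta$ and $\mathcal{L}$ commute [property 2)], I can write $M_t=\me^{t\mi\theta}\circ\me^{t\mathcal{L}}$. Taking the Hilbert--Schmidt adjoint reverses the order and conjugates the scalar, giving $M_t^\dagger=\me^{t\mathcal{L}^\dagger}\circ\me^{-t\mi\theta}$, where I use that the commutator superoperator is self-adjoint, $\theta^\dagger=\theta$, hence $(\mi\theta)^\dagger=-\mi\theta$; this last fact is a one-line check from the definition of the adjoint and cyclicity of the trace. Inserting $\mathcal{A}^{-1}\circ\mathcal{A}=\mathrm{id}$ between the two factors and using that conjugation by $\mathcal{A}$ distributes across composition and commutes with exponentiation, the Petz map splits as $\tilde M_t=\big(\mathcal{A}\circ\me^{t\mathcal{L}^\dagger}\circ\mathcal{A}^{-1}\big)\circ\big(\mathcal{A}\circ\me^{-t\mi\theta}\circ\mathcal{A}^{-1}\big)$.

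Now I would evaluate the two conjugated factors separately. The first factor is \emph{exactly} the Petz recovery map \eqref{eq:pets dyamic recov map} of the purely dissipative semigroup $\me^{t\mathcal{L}}$, so Theorem \ref{lem:qdb} applies verbatim and yields $\mathcal{A}\circ\me^{t\mathcal{L}^\dagger}\circ\mathcal{A}^{-1}=\me^{t\mathcal{L}}$. For the second factor I would use that $[\hat H_{\text{eff}},\Omega]=0$ forces $\hat H_{\text{eff}}$ to commute with $\Omega^{\pm 1/2}$; a short computation then shows $\mathcal{A}\circ\theta\circ\mathcal{A}^{-1}=\theta$, so that $\mathcal{A}\circ\me^{-t\mi\theta}\circ\mathcal{A}^{-1}=\me^{-t\mi\theta}$. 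Combining, $\tilde M_t=\me^{t\mathcal{L}}\circ\me^{-t\mi\theta}$, and a final appeal to commutativity recombines this into the single exponential $\me^{-t\mi\theta+t\mathcal{L}}$, as claimed.

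The main obstacle is careful bookkeeping rather than a deep difficulty: one must keep straight (i) the order reversal and scalar conjugation in passing from $M_t$ to $M_t^\dagger$, and (ii) the fact that the two QDB conditions play distinct roles --- the scalar-product identity \eqref{eq:qdbid} handles the dissipative factor through Theorem \ref{lem:qdb}, whereas $[\hat H_{\text{eff}},\Omega]=0$ is what renders the unitary factor invariant under conjugation by $\mathcal{A}$. The commutativity hypothesis 2) is used only at the very end, to collapse the product of two exponentials back into one; without it the identity still holds in the factored form $\me^{t\mathcal{L}}\circ\me^{-t\mi\theta}$.
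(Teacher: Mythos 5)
Your proof is correct and follows essentially the same route as the paper's: both factor $M_t$ into commuting unitary and dissipative parts, recover the dissipative factor via Theorem \ref{lem:qdb} (your conjugation $\mathcal{A}\circ\me^{t\mathcal{L}^\dagger}\circ\mathcal{A}^{-1}=\me^{t\mathcal{L}}$ is exactly the content of Eq.~\eqref{eq:qdbidn}), and use $[\hat H_{\text{eff}},\Omega]=0$ to turn the unitary factor into its inverse, with your insertion of $\mathcal{A}^{-1}\circ\mathcal{A}$ merely reproving inline the composition rule $\widetilde{\Gamma_1\circ\Gamma_2}=\tilde\Gamma_2\circ\tilde\Gamma_1$ that the paper cites from \cite{li2014squashed}. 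One small slip in your closing commentary: hypothesis 2) is not used ``only at the very end'' --- you already invoked it at the start to write $M_t=\me^{t\mi\theta}\circ\me^{t\mathcal{L}}$, so without it your factored form would not be available either (though the conclusion can still be reached by conjugating the generator $-\mi\theta+\mathcal{L}^\dagger$ directly).
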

\begin{proof}
	We just need to note two facts:
	\begin{itemize}
		\item The Petz recovery map of a unitary map $U(\cdot)U^\dag$ that had fixed point $\Omega$ is $U^\dag(\cdot)U$.
		\item The Petz recovery map of a composition of two maps with the same fixed point, is equal to the composition of the Petz recovery maps of the individual maps, i.e. $\widetilde{\Gamma_1\circ\Gamma_2}=\tilde\Gamma_2\circ\tilde\Gamma_1$ (this is one of the key properties listed in \cite{li2014squashed}).
	\end{itemize} 
	We hence can write the recovery map of $M_t(\cdot)$ as
	\begin{equation}
		\tilde M_t(\rho_S)=e^{t\mathcal{L}}( e^{i H_{\text{eff}} t} \rho_S e^{-i H_{\text{eff}} t})=e^{i H_{\text{eff}} t} e^{t\mathcal{L}}(\rho_S) e^{-i H_{\text{eff}} t}.
	\end{equation}
	The only difference between $M_t$ and $\tilde M_t$ is the change of sign in the time of the unitary evolution. The recovery map is then made up of the dissipative part evolving forwards, and the unitary part evolving backwards in time.
\end{proof}

\begin{theorem}(Theorem \ref{thorem:Li-Winter conjecture for Davies maps} of main text)\label{thorem:twice the evolved time}
	Assume conditions in Section \ref{sec:Conditions for Theorem Li-winter conjec} hold and $T_t(\cdot)$ satisfies quantum detailed balance (Eq. \eqref{eq:QDB}) and has zero unitary part, $\theta=0$. Then $T_t(\cdot)$ 
	satisfies the inequality
	\be\label{Eq:proof theorem 1 eq2}
	D((\cdot)\|\tau_S)-D(T_t(\cdot)\|\tau_S)\geq D\left((\cdot)\big{\|} T_{2t}(\cdot))\right),\quad t\geq 0.
	\ee
\end{theorem}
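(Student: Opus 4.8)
The plan is to combine the monotonicity-type inequality already established in Lemma~\ref{thorem:Li-Winter conjecture for Davies maps2} with the identification of the Petz recovery map proven in Theorem~\ref{lem:qdb}. Lemma~\ref{thorem:Li-Winter conjecture for Davies maps2} provides, for every state and all $t\geq 0$,
\be
D((\cdot)\|\tau_S)-D(T_t(\cdot)\|\tau_S)\geq D\left((\cdot)\big\|\tilde T_t(T_t(\cdot))\right),
\ee
where $\tilde T_t$ is the Petz recovery map of $T_t$ with respect to the fixed point $\tau_S$. This inequality rests only on the conditions of Section~\ref{sec:Conditions for Theorem Li-winter conjec}, so all the analytic work—in particular the infinite bath-volume limit and the weak-coupling (Van Hove) limit—has already been discharged. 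The remaining task is purely to rewrite the right-hand side in the claimed closed form.

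First I would invoke the extra hypotheses of the present theorem, namely that $T_t$ obeys quantum detailed balance and has vanishing unitary part, $\theta=0$. Then $T_t$ is generated purely by its Lindbladian, $T_t(\cdot)=\me^{t\mathcal{L}}(\cdot)$, and Theorem~\ref{lem:qdb} applies directly with $\Omega=\tau_S$; since $\tau_S$ is the Gibbs state of a finite-dimensional system it has full rank, so the scalar product $\langle\cdot,\cdot\rangle_{\tau_S}$ is non-degenerate and the Petz map is well defined. Theorem~\ref{lem:qdb} then yields $\tilde T_t(\cdot)=T_t(\cdot)$ for all $t\geq 0$, so that the right-hand side above becomes $D\left((\cdot)\|T_t(T_t(\cdot))\right)$.

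Next I would use the time-independence of the generator (property 2 of Davies maps), which makes $\{T_t\}_{t\geq 0}$ a one-parameter semigroup, $T_t\circ T_s=T_{t+s}$. In particular $T_t(T_t(\cdot))=T_{2t}(\cdot)$, and substituting this gives the asserted bound
\be
D((\cdot)\|\tau_S)-D(T_t(\cdot)\|\tau_S)\geq D\left((\cdot)\big\| T_{2t}(\cdot)\right),\quad t\geq 0.
\ee

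Since the two nontrivial ingredients are already available, I do not expect a genuine obstacle here; the argument is essentially bookkeeping. The only points warranting care are checking that the hypotheses of Theorem~\ref{lem:qdb} are genuinely met—full support of $\tau_S$ for the recovery map to exist and for detailed balance to pin it down uniquely—and confirming that the composition $T_t\circ T_t=T_{2t}$ is legitimate, which follows precisely because $\mathcal{L}$ is time-independent. I would therefore present the proof as a one-line chaining of Lemma~\ref{thorem:Li-Winter conjecture for Davies maps2}, the equality $\tilde T_t=T_t$ from Theorem~\ref{lem:qdb}, and the semigroup composition.
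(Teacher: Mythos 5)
Your proposal is correct and follows exactly the paper's own route: the paper proves this theorem as a direct consequence of Lemma~\ref{thorem:Li-Winter conjecture for Davies maps2} and Theorem~\ref{lem:qdb}, with the semigroup composition $T_t\circ T_t=T_{2t}$ supplied by the time-independence of the generator, just as you describe. Your additional checks (full rank of $\tau_S$, well-definedness of the Petz map) are sound and merely make explicit what the paper leaves implicit.
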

\begin{proof}
	Direct consequence of Theorems \ref{lem:qdb}, \ref {thorem:Li-Winter conjecture for Davies maps2}.
\end{proof}
\begin{remark}[When $\theta\neq 0$]
	Due to properties 3), 5) of the main text satisfied by Davies maps, and the unitary invariance of the Relative entropy (i.e. $D(U\cdot U^\dagger\|U\cdot U^\dagger)=D(\cdot\|\cdot)$), it follows
	\be 
	D((\cdot)\|\tau_S)-D(T_t(\cdot)\|\tau_S)=D((\cdot)\|\tau_S)-D(\me^{t\mathcal{L}}(\cdot)\|\tau_S),
	\ee
	and thus the l.h.s. of Eq. \eqref{Eq:proof theorem 1 eq2} is the same even when a non zero unitary part is included.
	Furthermore, we note that the canonical form of Daives maps have  $\theta(\mathcal{L} (\cdot)) =\mathcal{L} (\theta(\cdot))$ by definition [see property 3) in main text] and thus, due to Lemma \ref{lem:qdb unitaary}, even when $\theta\neq 0$, we have that
	\be 
	D((\cdot)\|\tilde T_t(T_t(\cdot)))=D((\cdot)\|\me^{2t\mathcal{L}}(\cdot)),
	\ee
	which is the r.h.s. of Eq. \eqref{Eq:proof theorem 1 eq2}.  Thus applying Theorem \ref{thorem:Li-Winter conjecture for Davies maps}, we have
	\be 
	D((\cdot)\|\tau_S)-D(T_t(\cdot)\|\tau_S)=D((\cdot)\|\tau_S)-D(\me^{t\mathcal{L}}(\cdot)\|\tau_S)\geq D((\cdot)\|\me^{2t\mathcal{L}}(\cdot)),
	\ee
	for any $\theta$.
\end{remark}

\subsection{Spohn's inequality: rate of entropy production}

\label{Spohn}

We give an alternative proof of a well-known result which was first shown in \cite{spohn1978entropy} that gives the expression for the infinitesimal rate of entropy production of a Davies map. This is stated without a proof in many standard references such as \cite{breuer2002theory,alicki2007quantum}. Then we show in a similar way how in the infinitesimal time limit our lower bound becomes trivial.

First we need the following lemma, which proof can be found in, for instance, \cite{carlen2010trace}.
\begin{lemma} \label{le:adler} Let $\id\in\cc^{n\times n}$ be the identity matrix, and $A,B\in\cc ^{n\times n}$ be matrices such that both $A$ and $A+tB$ are positive with $t\in\rr$, we have that
	\begin{equation}\label{eq:adler}
		\log{(A+tB)}-\log{A}=t \int_0^1 \frac{1}{(1-x)A+x \id}B\frac{1}{(1-x)A+x \id} \text{d}x+\mathcal{O}(t^2)
	\end{equation}
\end{lemma}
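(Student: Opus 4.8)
The plan is to recognize the integral on the right-hand side of Eq.~\eqref{eq:adler} as the Fr\'echet derivative of the matrix logarithm at $A$ in the direction $B$, and then to control the quadratic remainder. The natural starting point is the resolvent representation of the logarithm of a positive-definite matrix $M$,
\begin{equation}
\log M = \int_0^\infty\left(\frac{1}{1+s}\id - (M+s\id)^{-1}\right)\mathrm{d}s,
\end{equation}
which holds for any positive-definite $M$ and converges because the integrand is bounded near $s=0$ and decays like $s^{-2}$ as $s\to\infty$.

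First I would apply this representation to $M=A+tB$ and to $M=A$ and subtract; the terms $\frac{1}{1+s}\id$ cancel, leaving
\begin{equation}
\log(A+tB)-\log A = \int_0^\infty\left[(A+s\id)^{-1}-(A+tB+s\id)^{-1}\right]\mathrm{d}s.
\end{equation}
Since $A$ is positive definite and eigenvalues depend continuously on $t$, the matrix $A+tB$ remains positive definite for $|t|$ small, so the representation applies and the two resolvents are uniformly bounded in $s$. Next I would expand the perturbed resolvent with the second resolvent identity, iterated once, as
\begin{equation}
(A+tB+s\id)^{-1} = (A+s\id)^{-1} - t\,(A+s\id)^{-1}B(A+s\id)^{-1} + R_s(t),
\end{equation}
with $R_s(t)=t^2(A+s\id)^{-1}B(A+s\id)^{-1}B(A+tB+s\id)^{-1}$. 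Substituting and retaining the linear term gives
\begin{equation}
\log(A+tB)-\log A = t\int_0^\infty (A+s\id)^{-1}B(A+s\id)^{-1}\,\mathrm{d}s - \int_0^\infty R_s(t)\,\mathrm{d}s.
\end{equation}

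To match the precise form of Eq.~\eqref{eq:adler} I would change variables via $s=\frac{x}{1-x}$, so that $A+s\id=(1-x)^{-1}\big((1-x)A+x\id\big)$ and $\mathrm{d}s=(1-x)^{-2}\mathrm{d}x$; the factor $(1-x)^2$ produced by the two resolvents cancels the Jacobian exactly, carrying $\int_0^\infty(A+s\id)^{-1}B(A+s\id)^{-1}\mathrm{d}s$ onto $\int_0^1\frac{1}{(1-x)A+x\id}B\frac{1}{(1-x)A+x\id}\mathrm{d}x$, which is the claimed leading term. The only genuinely analytic step, and hence the main obstacle, is bounding $\int_0^\infty R_s(t)\,\mathrm{d}s=\mathcal{O}(t^2)$: using $\|(A+s\id)^{-1}\|\le(a_{\min}+s)^{-1}$, where $a_{\min}>0$ is the smallest eigenvalue of $A$, together with the analogous bound for the perturbed resolvent valid for small $t$, the integrand of $R_s(t)$ is dominated by a constant times $t^2\|B\|^2(a_{\min}+s)^{-3}$, which is integrable on $[0,\infty)$ and independent of the direction of $B$. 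This simultaneously justifies differentiating under the integral sign and establishes the $\mathcal{O}(t^2)$ error term uniformly for small $t$, completing the argument.
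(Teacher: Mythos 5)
Your proof is correct, and it matches the paper's treatment: the paper does not prove Lemma \ref{le:adler} itself but defers to \cite{carlen2010trace}, where the argument is exactly the one you give. Your steps — the resolvent integral representation $\log M=\int_0^\infty\big(\tfrac{1}{1+s}\id-(M+s\id)^{-1}\big)\mathrm{d}s$, the iterated second resolvent identity isolating the linear term with an explicit $t^2$ remainder bounded via $\|(A+s\id)^{-1}\|\le(a_{\min}+s)^{-1}$, and the substitution $s=x/(1-x)$ whose Jacobian cancels the $(1-x)^2$ from the two resolvents — constitute the standard proof of this Fr\'echet-derivative formula, with all details (positivity of $A+tB$ for small $t$, integrability of the remainder) correctly handled.
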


With this, we can show the following:
\begin{theorem}\label{lemma3}
	Let $\mathcal{L}(\rho_S (t))$ be the generator of a dynamical semigroup, with a fixed point $\tau_S$ such that $\mathcal{L}(\tau_S)=0$. We have that the entropy production rate $\sigma(\rho_S (t))$ is given by
	\begin{equation}\label{eq:entpinf}
		\sigma(\rho_S (t)):= -\frac{ \text{d}D(\rho_S (t) || \tau_S)}{\text{d} t}=\text{Tr}[\mathcal{L}(\rho_S (t))(\log{\tau_S}-\log{\rho_S (t)})] + \text{Tr}[\mathcal{L}(\rho_S (t)) \Pi_{\rho_S (t)}]\ge 0,
	\end{equation}
	where $\Pi_{\rho_S (t)}$ is the projector onto the support of $\rho_S (t)$. The second term of the sum vanishes at all times for which the rate is finite.
\end{theorem}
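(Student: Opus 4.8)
The plan is to differentiate the relative entropy directly and read off the two terms. Writing $D(\rho_S(t)\|\tau_S)=\Tr[\rho_S(t)\log\rho_S(t)]-\Tr[\rho_S(t)\log\tau_S]$ and using $\dot\rho_S(t)=\mathcal{L}(\rho_S(t))$, the $\log\tau_S$ piece is immediate because $\tau_S$ is time independent: its contribution to $\sigma=-\tfrac{\mathrm{d}}{\mathrm{d}t}D$ is $+\Tr[\mathcal{L}(\rho_S(t))\log\tau_S]$. The real work is in differentiating $\Tr[\rho_S\log\rho_S]$, where the product rule gives $\Tr[\mathcal{L}(\rho_S)\log\rho_S]+\Tr[\rho_S\,\tfrac{\mathrm{d}}{\mathrm{d}t}\log\rho_S]$; the first of these combines with the $\log\tau_S$ term to produce the advertised $\Tr[\mathcal{L}(\rho_S)(\log\tau_S-\log\rho_S)]$, while the second will produce the support-projector term.

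First I would evaluate $\tfrac{\mathrm{d}}{\mathrm{d}t}\log\rho_S$ using Lemma \ref{le:adler}. Setting $A=\rho_S(t)$ and $B=\mathcal{L}(\rho_S(t))$ and reading off the coefficient linear in the time increment, $\tfrac{\mathrm{d}}{\mathrm{d}t}\log\rho_S=\int_0^1 \tfrac{1}{(1-x)\rho_S+x\id}\,\mathcal{L}(\rho_S)\,\tfrac{1}{(1-x)\rho_S+x\id}\,\mathrm{d}x$. Substituting into $\Tr[\rho_S\,\tfrac{\mathrm{d}}{\mathrm{d}t}\log\rho_S]$, using cyclicity of the trace and the fact that $\rho_S$ commutes with $(1-x)\rho_S+x\id$, collapses the expression to $\Tr[\,(\int_0^1 \rho_S\,((1-x)\rho_S+x\id)^{-2}\,\mathrm{d}x)\,\mathcal{L}(\rho_S)\,]$. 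The scalar integral is done eigenvalue by eigenvalue: for an eigenvalue $p$ of $\rho_S$ one gets $\int_0^1 p\,((1-x)p+x)^{-2}\,\mathrm{d}x=1$ when $p>0$ and $0$ when $p=0$, so the matrix integral equals the support projector $\Pi_{\rho_S}$. This isolates exactly the term $\Tr[\mathcal{L}(\rho_S)\Pi_{\rho_S}]$ of \eqref{eq:entpinf} (with a definite sign fixed by the bookkeeping above, which is immaterial once we show the term vanishes). For the non-negativity $\sigma\ge 0$ I would simply invoke contractivity of the relative entropy under the CPTP semigroup with fixed point $\tau_S$: $D(\rho_S(t)\|\tau_S)$ is non-increasing in $t$, so its negated derivative is non-negative.

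Finally, the vanishing of the $\Pi_{\rho_S}$ term whenever the rate is finite is where the subtlety lies, and I expect this to be the main obstacle. Since $\mathcal{L}$ generates a trace-preserving semigroup, $\Tr[\mathcal{L}(\rho_S)]=0$, hence $\Tr[\mathcal{L}(\rho_S)\Pi_{\rho_S}]=-\Tr[\mathcal{L}(\rho_S)(\id-\Pi_{\rho_S})]$, i.e.\ the trace of $\mathcal{L}(\rho_S)$ over the kernel of $\rho_S$. When $\rho_S$ has full support this projector is $\id$ and the term is trivially zero. On the boundary of state space I would argue from complete positivity, which forces $\rho_S(t)$ to stay positive: for any $\ket{k}$ in the kernel of $\rho_S$ the boundary condition $\langle k|\mathcal{L}(\rho_S)|k\rangle\ge 0$ holds, so the kernel block of $\mathcal{L}(\rho_S)$ is positive semidefinite. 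At the same time the term $\Tr[\mathcal{L}(\rho_S)\log\rho_S]$ carries a contribution proportional to $\log(0)\cdot\langle k|\mathcal{L}(\rho_S)|k\rangle$ from each kernel vector, which diverges unless $\langle k|\mathcal{L}(\rho_S)|k\rangle=0$. Finiteness of the rate therefore forces the positive-semidefinite kernel block to have vanishing diagonal, hence to vanish entirely, giving $\Tr[\mathcal{L}(\rho_S)(\id-\Pi_{\rho_S})]=0$ and so $\Tr[\mathcal{L}(\rho_S)\Pi_{\rho_S}]=0$. Making the divergence argument rigorous (controlling $\log$ of vanishing eigenvalues against the matrix elements) and justifying the termwise application of Lemma \ref{le:adler} at boundary points is the delicate regularity bookkeeping I would have to handle with the most care.
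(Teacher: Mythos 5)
Your proposal is correct and follows essentially the same route as the paper's proof: both hinge on Lemma \ref{le:adler} to differentiate the $\Tr[\rho_S\log\rho_S]$ piece, collapse the double resolvent via cyclicity and the scalar integral $\int_0^1\bigl((1-x)p+x\bigr)^{-2}\,\mathrm{d}x=1/p$ to identify the support projector $\Pi_{\rho_S}$, obtain non-negativity from data processing, and dispose of the $\Tr[\mathcal{L}(\rho_S)\Pi_{\rho_S}]$ term using tracelessness of $\mathcal{L}$ together with the logarithmic divergence of $\Tr[\mathcal{L}(\rho_S)\log\rho_S]$ on the kernel when the rank increases. Your explicit positive-semidefinite kernel-block argument is a mild sharpening of the paper's brief appeal to that divergence (and you rightly note the sign of the projector term is immaterial, which quietly absorbs a sign slip present in the paper's own statement), but the decomposition, key lemma, and all essential steps coincide.
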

\begin{proof}
	The last inequality (positivity) follows from the data processing inequality for the relative entropy, so we only need to prove the equality. The proof only requires Lemma \ref{le:adler} and some algebraic manipulations. We have that
	\begin{align}
		&\frac{\text{d} D(\rho_S (t) || \tau_S)}{\text{d}t}= \lim_{h\rightarrow 0} \frac{D(\rho_{t+h} || \tau_S)-D(\rho_S (t) || \tau_S)}{h}  \\
		&=\lim_{h\rightarrow 0} \frac{\text{Tr}[(\rho_S (t)+\mathcal{L}(\rho_S (t))h) (\log{\{\rho_S (t)+\mathcal{L}(\rho_S (t))h\}}-\log{\tau_S})]-\text{Tr}[\rho_S (t) (\log{\rho_S (t)}-\log{\tau_S})]}{h} \\
		&=\lim_{h\rightarrow 0} \frac{1}{h}\Big[ \text{Tr}[(\rho_S (t)+\mathcal{L}(\rho_S (t))h) \{\log({\rho_S (t)})+h \int_0^1 \frac{1}{(1-x)\rho_S (t)+x \id}\mathcal{L}(\rho_S (t))\frac{1}{(1-x)\rho_S (t)+x \id} \text{d}x\}-\log{\tau_S})]  \nonumber
		\\ & \qquad \qquad \qquad \qquad \qquad \qquad \qquad \qquad \qquad \qquad \qquad \qquad \qquad \qquad \qquad -\text{Tr}[\rho_S (t) (\log{\rho_S (t)}-\log{\tau_S})] \Big] \\
		&= \text{Tr}[\mathcal{L}(\rho_S (t))(\log{\rho_S (t)}-\log{\tau_S})] + \text{Tr}[\rho_S (t) \int_0^1 \frac{1}{(1-x)\rho_S (t)+x \id}\mathcal{L}(\rho_S (t))\frac{1}{(1-x)\rho_S (t)+x \id} \text{d}x ] \\ \label{eq:2term}
		&= \text{Tr}[\mathcal{L}(\rho_S (t))(\log{\rho_S (t)}-\log{\tau_S})] + \text{Tr}[\rho_S (t) \mathcal{L}(\rho_S (t)) \int_0^1 \Big(\frac{1}{(1-x)\rho_S (t)+x \id}\Big)^2 \text{d}x ] .
	\end{align}
	Where to go from the 2nd to the third line we used Lemma \ref{le:adler}, and from the 4th to the 5th we use the ciclicity and linearity of the trace. Now note the following integral
	\begin{equation}
		\int_0^1 \Big(\frac{1}{(1-x)p+x}\Big)^2 \text{d}x=\frac{1}{p} \,\,\, \forall \, p \neq 0.
	\end{equation}
	This means that, on the support of $\rho_S (t)$, 
	\begin{equation}\label{eq:supp}
		\int_0^1 \Big(\frac{1}{(1-x)\rho_S (t)+x \id}\Big)^2 \text{d}x=\frac{1}{\rho_S (t)}.
	\end{equation}
	Note that outside the support of $\rho_S (t)$ this integral is not well defined. Given this, we can write
	\begin{equation} \label{eq:entp}
		\frac{\text{d} D(\rho_S (t) || \tau_S)}{\text{d}t}=\text{Tr}[\mathcal{L}(\rho_S (t))(\log{\rho_S (t)}-\log{\tau_S})] + \text{Tr}[\mathcal{L}(\rho_S (t)) \Pi_{\rho_S (t)}],
	\end{equation}
	where $\Pi_{\rho_S (t)}$ is the projector onto the support of $\rho_S (t)$.
	The Lindbladian is traceless $\text{Tr}[\mathcal{L}(\rho_S (t))]=0$ and hence second term of this Equation vanishes as long as $\text{supp} (\mathcal{L}(\rho_S (t))) \subseteq \text{supp}(\rho_S (t))$, which we can expect for most times. At instants in time when this is not the case and this term may give a finite contribution (that is, when the rank increases), the first term in Eq. \eqref{eq:entp} diverges logarithmically \cite{spohn1978entropy}, and hence that finite contribution is negligible.
\end{proof}
A similar reasoning can be used to show that the instantaneous lower bound on entropy production  rate that we can get from our main result in Eq. \eqref{Eq:proof theorem} is trivial for most times. In particular, we can show
\begin{lemma} \label{result2}
	The lower bound of Eq. \eqref{Eq:proof theorem} vanishes in the limit of infinitesimal time transformations. More precisely, we have that
	\begin{equation}
		\lim_{h \rightarrow 0} \frac{D(\rho_S (t) || \rho_S({t+2h}))}{h} = -2 \text{Tr} [\mathcal{L}(\rho_S (t)) \Pi_{\rho_S (t)}],
	\end{equation}
	where $\Pi_{\rho_S (t)}$ is the projector onto the support of $\rho_S (t)$.
	This vanishes as long as $\text{supp} (\mathcal{L}(\rho_S (t))) \subseteq \text{supp}(\rho_S (t))$.
\end{lemma}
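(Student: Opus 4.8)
The plan is to follow exactly the strategy of the proof of Theorem \ref{lemma3}, expanding the relative entropy to first order in $h$ by means of the integral representation of the logarithm in Lemma \ref{le:adler}. The crucial structural difference here is that the \emph{first} argument $\rho_S(t)$ is held fixed, and only the second argument $\rho_S(t+2h)=\rho_S(t)+2h\,\mathcal{L}(\rho_S(t))+\mathcal{O}(h^2)$ is perturbed. Writing $D(\rho_S(t)\|\rho_S(t+2h))=\text{Tr}[\rho_S(t)\log\rho_S(t)]-\text{Tr}[\rho_S(t)\log\rho_S(t+2h)]$, the first trace is independent of $h$, so the entire contribution comes from expanding $\log\rho_S(t+2h)$. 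Note in particular that, unlike in Theorem \ref{lemma3}, no term of the form $\text{Tr}[\mathcal{L}(\rho_S(t))(\log\rho_S(t)-\log\tau_S)]$ appears, which is precisely why the bound becomes trivial.

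First I would apply Lemma \ref{le:adler} with $A=\rho_S(t)$ and perturbation $2h\,\mathcal{L}(\rho_S(t))$, giving
\begin{equation}
\log\rho_S(t+2h)-\log\rho_S(t)=2h\int_0^1 \frac{1}{(1-x)\rho_S(t)+x\id}\,\mathcal{L}(\rho_S(t))\,\frac{1}{(1-x)\rho_S(t)+x\id}\,\text{d}x+\mathcal{O}(h^2).
\end{equation}
Multiplying by $\rho_S(t)$, taking the trace, dividing by $h$ and letting $h\to 0$ then yields
\begin{equation}
\lim_{h\to 0}\frac{D(\rho_S(t)\|\rho_S(t+2h))}{h}=-2\,\text{Tr}\left[\rho_S(t)\int_0^1 \frac{1}{(1-x)\rho_S(t)+x\id}\,\mathcal{L}(\rho_S(t))\,\frac{1}{(1-x)\rho_S(t)+x\id}\,\text{d}x\right].
\end{equation}
Next I would use the cyclicity and linearity of the trace together with the fact that $\rho_S(t)$ commutes with every resolvent $\big((1-x)\rho_S(t)+x\id\big)^{-1}$, exactly as in the passage from Eq. \eqref{eq:2term} to Eq. \eqref{eq:entp}, to collect both resolvent factors next to $\rho_S(t)$. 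This reduces the bracket to $\text{Tr}[\mathcal{L}(\rho_S(t))\,\rho_S(t)\int_0^1 ((1-x)\rho_S(t)+x\id)^{-2}\text{d}x]$, and invoking the scalar identity $\int_0^1 ((1-x)p+x)^{-2}\text{d}x=1/p$ from Theorem \ref{lemma3} (its operator version Eq. \eqref{eq:supp}) on the support of $\rho_S(t)$, the product $\rho_S(t)\int_0^1(\cdots)^{-2}\text{d}x$ collapses to the support projector $\Pi_{\rho_S(t)}$. This produces the claimed value $-2\,\text{Tr}[\mathcal{L}(\rho_S(t))\Pi_{\rho_S(t)}]$. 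Finally, since the Lindbladian is traceless, $\text{Tr}[\mathcal{L}(\rho_S(t))]=0$, so whenever $\text{supp}(\mathcal{L}(\rho_S(t)))\subseteq\text{supp}(\rho_S(t))$ one has $\Pi_{\rho_S(t)}\mathcal{L}(\rho_S(t))=\mathcal{L}(\rho_S(t))$ and the expression vanishes.

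The main obstacle is, as in Theorem \ref{lemma3}, the behaviour of the integral identity outside the support of $\rho_S(t)$: the integral $\int_0^1((1-x)p+x)^{-2}\text{d}x$ diverges as $p\to 0$, so Eq. \eqref{eq:supp} is valid only on the support, and the replacement of $\rho_S(t)\int_0^1(\cdots)^{-2}\text{d}x$ by $\Pi_{\rho_S(t)}$ is legitimate only at times where the rank of $\rho_S(t)$ does not increase. At the isolated instants where the rank jumps the same caveat discussed after Eq. \eqref{eq:entp} applies, and these do not affect the stated conclusion that the lower bound vanishes under the support condition.
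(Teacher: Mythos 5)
Your proposal is correct and follows essentially the same route as the paper's proof: expand $\log\rho_S(t+2h)$ via Lemma \ref{le:adler}, then collapse the double resolvent using cyclicity and the identity of Eq. \eqref{eq:supp} to obtain $-2\,\text{Tr}[\mathcal{L}(\rho_S(t))\Pi_{\rho_S(t)}]$. Your added remarks — that the fixed first argument kills the $\text{Tr}[\mathcal{L}(\rho_S(t))(\log\rho_S(t)-\log\tau_S)]$ term, and the rank-increase caveat for Eq. \eqref{eq:supp} — are accurate elaborations of what the paper leaves implicit.
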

\begin{proof}
	The proof is similar to the one for Theorem  \ref{lemma3} above.
	\begin{align}
		\lim_{h \rightarrow 0} \frac{D(\rho_S (t) || \rho({t+2h}))}{h} &= \lim_{h \rightarrow 0} \frac{1}{h} \text{Tr}[\rho_S (t) (\log{\rho_S (t)}-\log{(\rho_S (t)+2h \mathcal{L}(\rho_S (t)))} ] \\
		&=\text{Tr}[-2 \rho_S (t) \int_0^1 \frac{1}{(1-x)\rho_S (t)+x \id}\mathcal{L}(\rho_S (t))\frac{1}{(1-x)\rho_S (t)+x \id} \text{d}x\}] \\
		&= -2 \text{Tr} [\mathcal{L}(\rho_S (t)) \Pi_{\rho_S (t)}],
	\end{align}
	where in the second line we applied Lemma \ref{le:adler}, and in the third we used Eq. \eqref{eq:supp}.
\end{proof}
Hence for infinitesimal times, the lower bound gives the same condition as the positivity condition in Eq. \eqref{eq:entpinf}. It will be nonzero only when $\text{supp} (\mathcal{L}(\rho_S (t))) \nsubseteq \text{supp}(\rho_S (t))$, in which case the rate of entropy production diverges (at points in time when the rank of the system increases).

\section{Proof of Theorem \ref{lemmak}}
\label{app:lemmak}

Here we prove the following theorem from the main text: 

\begin{theorem} 
	[Tightness of the entropy production bound] The largest constant $k\geq 0$ such that 
	\begin{equation}\label{eq:largek2}
		F_\beta (\rho_S(0))-F_\beta (\rho_S(t)) \geq \frac{1}{\beta}D\left(\rho_S(0) \big{\|} \rho_S(k \,t)\right)
	\end{equation}
	holds for all Davies maps, is $k=2$.
\end{theorem}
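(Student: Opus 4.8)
The plan is to dispatch the two halves of the claim separately. Theorem~\ref{thorem:Li-Winter conjecture for Davies maps} already establishes that the inequality holds with $k=2$, so the only thing left is the converse: for every $k>2$ one must exhibit a genuine Davies map and a time $t$ at which Eq.~\eqref{eq:largek2} is violated. I would not attempt this for a general map; instead I would reduce to the simplest possible example and compute everything in closed form.

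Concretely, take a qubit with a non-degenerate Hamiltonian $\hat H_S$ and a \emph{diagonal} (population-only) initial state $\rho_S(0)$. Since a Davies map with non-degenerate $\hat H_S$ never mixes populations with coherences, the dynamics stays diagonal and collapses to a classical two-state birth--death process obeying detailed balance. Its unique slow mode decays exponentially, so the excited-state population relaxes as $p(t)=q+a\,e^{-\gamma t}$, where $q$ is the thermal population, $a=p(0)-q\neq0$ measures the non-thermality of the initial state, and $\gamma>0$ is the relaxation rate. Writing $s:=e^{-\gamma t}\in(0,1]$, both sides of Eq.~\eqref{eq:largek2} become explicit classical relative entropies: the left-hand side is $D(\rho_S(0)\|\tau_S)-D(\rho_S(t)\|\tau_S)$ with $\rho_S(t)$ carrying deviation $as$, and the right-hand side is $D\!\left(\rho_S(0)\|\rho_S(kt)\right)$ with second argument carrying deviation $as^{k}$.

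The crux is a Taylor expansion about the fixed point, i.e. $s\to0^{+}$. The left-hand side is controlled by $D(\rho_S(t)\|\tau_S)$, whose \emph{first} argument approaches $\tau_S$; since the gradient of the relative entropy vanishes at its minimum, this term is \emph{quadratic} in the deviation, giving
\be
\text{LHS}=D(\rho_S(0)\|\tau_S)-\frac{a^{2}}{2\,q(1-q)}\,s^{2}+\bo(s^{3}).
\ee
By contrast the right-hand side is $D(\rho_S(0)\|\tau_S+\delta)$ with $\delta\sim as^{k}$ perturbing the \emph{second} argument, where the gradient $\partial_\sigma D(\rho_S(0)\|\sigma)\big|_{\sigma=\tau_S}$ is \emph{nonzero} whenever $a\neq0$; hence the leading correction is \emph{linear} in the deviation,
\be
\text{RHS}=D(\rho_S(0)\|\tau_S)-\frac{a^{2}}{q(1-q)}\,s^{k}+\bo(s^{2k}).
\ee
Subtracting, $\text{LHS}-\text{RHS}=\frac{a^{2}}{q(1-q)}\big(s^{k}-\tfrac12 s^{2}\big)+\text{h.o.t.}$

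I would then read off the conclusion from the competition of the two powers. For $k=2$ the bracket is $\tfrac12 s^{2}>0$, reproducing (and refining) the main theorem. For any $k>2$ we have $s^{k}=o(s^{2})$, so the negative $-\tfrac12 s^{2}$ term dominates and $\text{LHS}-\text{RHS}<0$ for all sufficiently small $s$, i.e. for all sufficiently large $t$: the bound is violated. Since this holds for each fixed $k>2$, no constant larger than $2$ can work, which is exactly the statement. The main obstacle I anticipate is not the idea but the rigour of the asymptotics: one must bound the higher-order remainders uniformly enough to guarantee that the sign of the leading term really decides the inequality at a concrete finite $t$, verify that the linear coefficient is genuinely nonzero (which is why one needs a non-thermal initial state $a\neq0$ and a full-rank thermal state $0<q<1$), and check that the chosen qubit channel is a bona fide Davies map satisfying properties 1)--5).
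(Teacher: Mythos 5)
Your proposal is correct and takes essentially the same route as the paper's own proof in Appendix \ref{app:lemmak}: both halves coincide, namely citing Theorem \ref{thorem:Li-Winter conjecture for Davies maps} for $k=2$ and then, for $k>2$, using a qubit Davies map on a coherence-free state with $p(t)-q=(p(0)-q)\,e^{-\gamma t}$ and expanding $g=\mathrm{LHS}-\mathrm{RHS}$ in $x=e^{-\gamma t}$ near the fixed point to find $g=\frac{a^{2}}{q(1-q)}\left(x^{k}-\tfrac{1}{2}x^{2}\right)+\text{h.o.t.}$, which is negative at large $t$ whenever $k>2$. Your shortcut of reading off the quadratic coefficient from the vanishing gradient of $D(\cdot\|\tau_S)$ in its first argument and the linear coefficient from the nonvanishing gradient in its second argument reproduces exactly the coefficients $-\frac{a^{2}}{2q(1-q)}x^{2}+\frac{a^{2}}{q(1-q)}x^{k}$ that the paper obtains by direct expansion of the logarithms, so the difference is purely presentational.
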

\begin{proof}
	We show the inequality is violated for any $k>2$ by finding a simple family of Davies maps for which the violation is proven analytically.	
	
	Let us take the general form of a Davies map on a qubit, and act on a state with initial density matrix $\rho$ without coherence in energy\footnote{We assume no coherence for simplicity. An analogous, yet longer, proof of the violation of inequality Eq. \eqref{eq:largek2} for $k>0$ holds for the case of coherence in energy is possible.}  $\rho(0)=\text{diag} (p(0), 1-p(0)) $, and with a corresponding thermal state $\tau=\text{diag}(q,1-q)$. 
	The time evolution of the Davies map is only that of the populations (as no coherence in the energy eigenbasis is created), and it takes the general form
	\begin{equation} \label{eq:timee}
		\begin{pmatrix}
			p(t) \\ 1-p(t)
		\end{pmatrix}
		=
		\begin{pmatrix}
			1-a_t && a_t\frac{q}{1-q} \\
			a_t && 1-a_t\frac{q}{1-q}
		\end{pmatrix}
		\begin{pmatrix} 
			p(0) \\ 1-p(0)
		\end{pmatrix},
	\end{equation}
	where $a_t=(1-q)(1-e^{-A t})$ for some $A>0$.
	Let us now define the function
	\begin{equation}
		g(t,k) := \beta F_\beta (\rho_S(0))- \beta F_\beta (\rho_S(t)) - D\left(\rho_S(0) \big{\|} \rho_S(k \,t)\right),
	\end{equation}
	and the variable $x:=e^{-A t}$. One can show, after some algebra, that for the time evolution of Eq.\eqref{eq:timee}
	\begin{align}
		g(x,k) =\big[ (q-1)+x(p(0)-q)\big]\log{\big(1+x\frac{q-p(0)}{1-q}\big)}-\big[ q+x(p(0)-q) \big] \\ \nonumber
		+(1-p(0))\log{\big( 1+x^k \frac{q-p(0)}{1-q} \big)}+p(0) \log {\big( 1+ x^k \frac{p(0)-q}{q}\big)}.
	\end{align}
	For large $t$, $x$ will be arbitrarily small and hence we can expand the logarithms up to leading order in $x$. The zeroth and first order terms in $x$ cancel out, and we obtain
	\begin{equation}
		g(x,k) = \frac{-1}{2q(1-q)} x^2 (p(0)-q)^2+\frac{1}{q(1-q)}x^k (p(0)-q)^2 + \mathcal{O}(x^3).
	\end{equation}
	We see that if $k>2$, for sufficiently large time, the $k$-th order term will be very small compared to the $x^2$ one, which is always negative. For $k=2$ we have 
	\begin{equation}
		g(x,2) = \frac{1}{2q(1-q)} x^2 (p(0)-q)^2 + \mathcal{O}(x^3),
	\end{equation}
	such that the leading order is always positive.
	This completes the proof. 
	
\end{proof}

\section{Maps Beyond Davies}\label{sec: appendix Maps Beyond Davies}
Given that the inequality in Eq. \eqref{Eq:proof theorem 1 eq} is saturated in some limits, such as when the evolution approaches the fixed point, it is unlikely that a stronger inequality of a similar kind can be derived even in particular cases. However, general results are known for CPTP maps, leading to weaker forms of such bound. In this Section we state the best known general result from \cite{junge2015universal} and show how they simplify in particular cases of maps with properties similar to Davies maps. This means that we can also bound the entropy production of maps that may not be Davies maps.

The result, which proof involves techniques from complex interpolation theory, is the following:

\begin{theorem}
	(Main result of \cite{junge2015universal})
	Let $\Gamma(\cdot)$ be a CPTP map, and $\rho$, $\sigma$ any two quantum states. We have that
	\begin{equation}\label{eq:junge}
		D(\rho || \sigma) - D(\Gamma(\rho) || \Gamma(\sigma)) \ge -2 \int_\mathbb{R} \text{d}t\,\, p(t) \log{F(\rho, \tilde{\Gamma}_t (\Gamma(\rho))))},
	\end{equation}
	where $F(\rho,\sigma)=\tr[\sqrt{\sqrt{\sigma}\rho \sqrt{\sigma}}]$ is the quantum fidelity, the map $\tilde{\Gamma}_t$ is the \emph{rotated} recovery map
	\begin{equation}
		\tilde{\Gamma}_t(\cdot)=\sigma^{i t} \tilde \Gamma (\Gamma(\sigma)^{-i t} \cdot \Gamma(\sigma)^{i t})\sigma^{-i t}
	\end{equation}
	and $p(t)$ is the probability density function $p(t)=\frac{\pi}{2} (\cosh (\pi t)+1)^{-1}$.
\end{theorem}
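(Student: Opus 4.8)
The plan is to obtain this as a remainder-term strengthening of the data-processing inequality by routing through the sandwiched R\'enyi relative entropies and then invoking a complex-interpolation (Stein--Hirschman) argument; this is the strategy of \cite{junge2015universal}, and I sketch how I would reconstruct it. Recall the sandwiched R\'enyi divergence
\be
\tilde D_\alpha(\rho\|\sigma)=\frac{1}{\alpha-1}\log\tr\big[\big(\sigma^{\frac{1-\alpha}{2\alpha}}\rho\,\sigma^{\frac{1-\alpha}{2\alpha}}\big)^\alpha\big],
\ee
which obeys $\lim_{\alpha\to 1^+}\tilde D_\alpha(\rho\|\sigma)=D(\rho\|\sigma)$ and is monotone under CPTP maps. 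The key observation is that the quantity inside the logarithm is a Schatten norm, $\tr[(\cdots)^\alpha]=\|\sigma^{\frac{1-\alpha}{2\alpha}}\rho^{1/2}\|_{2\alpha}^{2\alpha}$ after standard rewriting, so the R\'enyi gap $\tilde D_\alpha(\rho\|\sigma)-\tilde D_\alpha(\Gamma(\rho)\|\Gamma(\sigma))$ can be phrased entirely in terms of ratios of Schatten norms of explicit operators built from $\rho^{1/2}$, $\sigma$, $\Gamma(\sigma)$ and the channel.

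The heart of the argument is to introduce an operator-valued function $G(z)$, analytic and uniformly bounded on the closed strip $\{z\in\cc:0\le\mathrm{Re}\,z\le 1\}$, assembled from the complex powers $\sigma^{z}$, $\Gamma(\sigma)^{-z}$, the factor $\rho^{1/2}$, and $\Gamma,\Gamma^\dagger$, so designed that a suitable Schatten norm of $G$ at an interior point $\mathrm{Re}\,z=\theta$ reproduces the R\'enyi quantity above, while on each boundary line $z=it$ and $z=1+it$ the norm of $G$ equals the root fidelity $F\big(\rho,\tilde\Gamma_t(\Gamma(\rho))\big)$ with the \emph{rotated} Petz map $\tilde\Gamma_t$ at rotation $t=\mathrm{Im}\,z$. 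Verifying this boundary identification amounts to expanding $\tilde\Gamma_t(\Gamma(\rho))=\sigma^{\frac12+it}\Gamma^\dagger\big(\Gamma(\sigma)^{-\frac12-it}\Gamma(\rho)\Gamma(\sigma)^{-\frac12+it}\big)\sigma^{\frac12-it}$ and recognising $F(\rho,\tau)=\|\rho^{1/2}\tau^{1/2}\|_1$ inside the boundary norm of $G$.

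I would then apply the Stein--Hirschman interpolation theorem, a sharpened Hadamard three-lines lemma, which states that for such a $G$ the log-norm at the interior point is controlled by a weighted integral of the log-norms on the two boundaries, with explicit probability-density weights depending only on $\theta$. Specialising to the symmetric midpoint $\theta=\tfrac12$ collapses the two boundary weights into the single density $p(t)=\tfrac{\pi}{2}(\cosh(\pi t)+1)^{-1}$, producing a bound of the form $\tilde D_\alpha\text{-gap}\ \ge\ -c(\alpha)\int_\mathbb{R}\mathrm{d}t\,p(t)\log F(\rho,\tilde\Gamma_t(\Gamma(\rho)))$. Finally I would take $\alpha\to 1^+$: the left-hand gap converges to $D(\rho\|\sigma)-D(\Gamma(\rho)\|\Gamma(\sigma))$, the prefactor $c(\alpha)$ tends to $2$, and since the boundary fidelities are already $\alpha$-independent in this normalisation the integral passes through unchanged, yielding Eq.~\eqref{eq:junge}.

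The main obstacle is analytic rather than algebraic. Establishing the hypotheses of the Stein--Hirschman theorem requires $G$ to be bounded and analytic across the whole strip, which forces careful treatment of the supports of $\sigma$ and $\Gamma(\sigma)$ (the negative complex powers $\Gamma(\sigma)^{-z}$ must be restricted to the support, so one either assumes full rank or projects and argues by continuity). Equally delicate is justifying the interchange of the $\alpha\to 1^+$ limit with the $t$-integration: this needs a dominated-convergence argument exploiting the rapid decay of $p(t)$ together with a bound on $\log F(\rho,\tilde\Gamma_t(\Gamma(\rho)))$ that is uniform in $\alpha$ near $1$. Once these analytic points are secured, the identification of the boundary norms with the rotated fidelities and the collapse of the Hirschman weights to $p(t)$ are essentially bookkeeping.
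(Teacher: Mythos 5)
You should first note that the paper does not actually prove this theorem: its ``proof'' is the single line ``See \cite{junge2015universal}'', so your sketch has to be measured against the argument of that reference. At the level of strategy you have reconstructed it faithfully: one builds an operator-valued function on the strip, schematically $G(z)=\left(\Gamma(\sigma)^{z/2}\otimes\id\right)V\sigma^{-z/2}\rho^{1/2}$ with $V$ a Stinespring isometry for $\Gamma$, applies the Stein--Hirschman strengthening of the three-lines theorem with Schatten exponents $p_0=2$ on the line $\mathrm{Re}\,z=0$ and $p_1=1$ on the line $\mathrm{Re}\,z=1$, identifies the right boundary norm with the root fidelity of the \emph{rotated} Petz map, and passes from a R\'enyi-type interior quantity to the von Neumann relative entropy; your remarks about supports (full rank or project-and-take-limits) are also exactly the caveats handled there.

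However, the mechanism by which you produce the density $p(t)$ and perform the limit contains a genuine error. First, the two boundary lines play asymmetric roles: on $\mathrm{Re}\,z=0$ the imaginary powers are unitary on the relevant supports and cancel, so that $G(\mi t)^\dagger G(\mi t)=\rho$ and $\|G(\mi t)\|_2=1$; this boundary contributes \emph{zero}, and only the line $\mathrm{Re}\,z=1$, measured in the $1$-norm, yields the terms $\log\sqrt{F\left(\rho,\tilde\Gamma_t(\Gamma(\rho))\right)}$. Your claim that \emph{both} boundaries equal the rotated fidelity would not produce the single-integral bound. Second, $p(t)$ does not arise by ``specialising to the symmetric midpoint $\theta=\tfrac12$'': the Hirschman kernel on the right boundary is
\begin{equation}
\beta_\theta(t)=\frac{\sin(\pi\theta)}{2\theta\left(\cosh(\pi t)+\cos(\pi\theta)\right)},
\end{equation}
which at $\theta=\tfrac12$ equals $1/\cosh(\pi t)$, not $\tfrac{\pi}{2}\left(\cosh(\pi t)+1\right)^{-1}$; the stated $p(t)$ is precisely the $\theta\to0^{+}$ limit $\beta_0(t)$. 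Third, and relatedly, you treat the interpolation point $\theta$ and the R\'enyi order $\alpha$ as independent parameters (fix $\theta=\tfrac12$, then send $\alpha\to1^{+}$), but in the construction they are locked together: the interior exponent is $p_\theta=2/(1+\theta)$, corresponding to $\alpha=1/(1+\theta)\in(\tfrac12,1)$, so the only admissible limit is $\theta\to0^{+}$, i.e.\ $\alpha\to1^{-}$ (not $1^{+}$), and no fixed-$\theta$ inequality with a prefactor $c(\alpha)\to2$ is ever formed. Instead one uses that $\log\|G(0)\|_2=0$, divides the Hirschman inequality by $\theta$, and lets $\theta\to0^{+}$: the left side becomes a derivative at $\theta=0$ reproducing $-\tfrac12\left[D(\rho\|\sigma)-D(\Gamma(\rho)\|\Gamma(\sigma))\right]$, while on the right $\beta_\theta\to p$ and the square root of the fidelity supplies the factor $2$. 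This also dissolves your worry about interchanging the $\alpha\to1$ limit with the $t$-integration: the exchange is controlled by the explicit, uniformly dominated kernels $\beta_\theta$, not by a separate uniform-in-$\alpha$ bound on the fidelities. As written, your midpoint specialisation would deliver a different kernel and an order-$\tfrac23$ R\'enyi statement from which the claimed inequality does not follow.
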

\begin{proof}
	See \cite{junge2015universal}.
\end{proof}
We now observe that the rotated map can be simplified given the following conditions:
\begin{itemize}
	\item If the map has a fixed point $\Gamma(\Omega)=\Omega$, the Petz recovery map simplifies 
	to become
	\begin{equation}
		\tilde{\Gamma}_t(\cdot)=\Omega^{i t} \tilde \Gamma (\Omega^{-i t} \cdot \Omega^{i t})\Omega^{-i t} \,\, \forall t \in \mathbb{R}
	\end{equation} 
	This by itself implies that $\tilde{\Gamma}_t(\Omega)=\Omega$.
	\item The map may also obey the property of \emph{time-translation symmetry}, where this is given by
	\begin{equation}\label{eq:tts app}
		\Gamma(\cdot)=\Omega^{i t} \Gamma(\Omega^{-i t} \cdot \Omega^{i t}) \Omega^{-i t}.
	\end{equation}
	
	If a map obeys this symmetry, the adjoint map $\Gamma^\dagger (\cdot)$ also will. This can be seen through the definition of the adjoint, which is that for any two matrices $A,B$,
	\begin{equation}
		\text{Tr}[A \Gamma (B)]=\text{Tr}[\Gamma^\dagger (A) B],
	\end{equation}
	and in particular, it holds for the matrices $A'=\Omega^{i t} A \Omega^{-i t}$, $B'=\Omega^{-i t} B \Omega^{i t}$. This, together with Eq. \eqref{eq:tts app}, means that 
	\begin{equation}
		\text{Tr}[\Gamma^\dagger (A) B]=\text{Tr}[\Omega^{i t} \Gamma^\dagger(\Omega^{-i t} \cdot \Omega^{i t}) \Omega^{-i t} (A) B]
	\end{equation}
	
	Hence this property, together with the fixed-point property, means that the rotated recovery map becomes equal to the Petz map, and the integral in Eq. \eqref{eq:junge} gets averaged out. 
	
	It may be the case, however, that the symmetry exists, but that the fixed point is not the thermal state, and hence the simplification does not occur. This may be the case for instance when there is weak coupling to a non-thermal environment.
	
	\item If on top of these two conditions the map has the property of quantum detailed balance, namely
	\be 
	\langle A, \Gamma^\dagger (B) \rangle_{\Omega} = \langle \Gamma^\dagger (A), B \rangle_{\Omega}, \text{ 
		for all } A,B\in\cc^{d_S\times d_S},
	\ee
	the Petz recovery map and the original one are the same $\tilde \Gamma (\cdot)=\Gamma(\cdot)$. Examples of maps which satisfy detailed balance which are not Davies maps do exist. See \cite{Mwolf_Cubitt} for general characterization of quantum dynamical semigroups. 
\end{itemize}

When all these hold we have that Eq. \eqref{eq:junge} becomes
\begin{equation}\label{eq:junge2}
	D(\rho || \Omega) - D(\Gamma(\rho) || \Omega) \ge -2 \log{F(\rho, \Gamma (\Gamma(\rho)))}.
\end{equation}
This bound could be tightened by replacing the $-2 \log{F(\rho, \Gamma (\Gamma(\rho)))}$ with the measured relative entropy, $D_\mathbb{M}(\rho\| \Gamma (\Gamma(\rho))))$ \cite{sutter2016multivariate}. This would achieve a tighter bound, although at the expense of it being less explicit, unless one could solve the maximization problem in the definition of the measured relative entropy. If the map is a dynamical semigroup with a time-independent generator $\Gamma=M_t$, we may also write $M_t(M_t( \cdot))=M_{2t}(\cdot)$.

Davies maps have all these properties. Further examples where all these properties appear are semigroups derived from the low-density limit (which models a system immersed in an ideal gas at low density, see \cite{alicki2007quantum} for details), or the so-called \emph{heat bath generators} \cite{kastoryano2014quantum}.

We note however that $D( \rho || \sigma) \ge -2 \log{F(\rho,\sigma)}$, and hence Eq. \eqref{eq:junge2} is a weaker bound than Eq. \eqref{Eq:proof theorem 1 eq}, and in particular is not tight as the fixed point is approached. 

\section{Equivalence of definitions of quantum detailed balance}\label{app:eqqdb}

In the literature, different nonequivalent definitions of the property of quantum detailed balance have been given. 
While in many places the one given is that of Eq. \eqref{eq:QDB}, an alternative definition, which can be found for instance in \cite{alicki2007quantum,kossakowski1977} is that the Lindbladian is self-adjoint with respect to the inner product
\begin{equation}\label{eq:qdb v2}
	\langle A, \mathcal{L}^\dagger(B) \rangle'_{\Omega} = \langle \mathcal{L}^\dagger(A), B \rangle'_{\Omega},
\end{equation}
for all $A,B\in\cc^{d_S\times d_S}$, where the inner product is defined as
\begin{equation}\label{eq:new inner prod}
	\langle A, B \rangle'_\Omega=\tr{[\Omega A^\dagger B]}.
\end{equation}
Eq. \ref{eq:new inner prod} is different from that of Eq. \eqref{eq:scalarp} due to the noncommutativity of the operators. The solution to Eq \eqref{eq:qdb v2} is \cite{G_rule}
\begin{equation}\label{eq:qdb2}
	\mathcal{L}(\cdot)=\Omega \mathcal{L}^\dagger (\Omega^{-1} \cdot),
\end{equation}
while the solution to Eq. \eqref{eq:QDB} is \cite{ohya2004quantum}
\begin{equation}\label{eq:qdbid 3}
	\mathcal{L}(\cdot)=\Omega^{1/2}  \mathcal{L}^\dagger (\Omega^{-1/2} \cdot \Omega^{-1/2}) \Omega^{1/2}.
\end{equation}
We now give a simple proof of the fact that, under the condition that the map is time-translation invariant w.r.t. fixed point, the two conditions are the same.

\begin{theorem}
	For a Lindbladian operator $\mathcal{L}(\cdot)$ which obeys the property of time-translation symmetry w.r.t. fixed point $\Omega$ of full Rank (Eq. \eqref{eq:tts}), the quantum detailed balance conditions of Eq. \eqref{eq:qdb2} and Eq. \eqref{eq:qdbid 3} are equivalent.
\end{theorem}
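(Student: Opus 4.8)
The plan is to translate both candidate conditions and the hypothesis into the language of commuting multiplication superoperators. Define $L_\Omega(\cdot)=\Omega(\cdot)$ and $R_\Omega(\cdot)=(\cdot)\Omega$, which commute because they act from opposite sides, and introduce the modular superoperator $\Delta_\Omega=L_\Omega R_\Omega^{-1}$, so that $\Delta_\Omega^{it}(\cdot)=\Omega^{it}(\cdot)\Omega^{-it}$, together with the symmetric multiplier $\Gamma_\Omega=L_\Omega R_\Omega$, so that $\Gamma_\Omega^{1/2}(\cdot)=\Omega^{1/2}(\cdot)\Omega^{1/2}$. In this notation the asymmetric condition Eq. \eqref{eq:qdb2} reads $\mathcal{L}=L_\Omega\,\mathcal{L}^\dagger\,L_{\Omega^{-1}}$, the symmetric condition Eq. \eqref{eq:qdbid 3} reads $\mathcal{L}=\Gamma_\Omega^{1/2}\,\mathcal{L}^\dagger\,\Gamma_\Omega^{-1/2}$, and the TTSFP hypothesis Eq. \eqref{eq:tts} is exactly the statement that $\mathcal{L}\,\Delta_\Omega^{it}=\Delta_\Omega^{it}\,\mathcal{L}$ for all $t\in\rr$.

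First I would record the purely algebraic factorisations $L_\Omega=\Gamma_\Omega^{1/2}\Delta_\Omega^{1/2}$ and $L_{\Omega^{-1}}=\Delta_\Omega^{-1/2}\Gamma_\Omega^{-1/2}$, which are immediate since $L_\Omega$ and $R_\Omega$ commute. Substituting them into the right-hand side of the asymmetric condition gives $L_\Omega\,\mathcal{L}^\dagger\,L_{\Omega^{-1}}=\Gamma_\Omega^{1/2}\big(\Delta_\Omega^{1/2}\,\mathcal{L}^\dagger\,\Delta_\Omega^{-1/2}\big)\Gamma_\Omega^{-1/2}$. Hence the right-hand sides of Eqs. \eqref{eq:qdb2} and \eqref{eq:qdbid 3} coincide as superoperators precisely when $\Delta_\Omega^{1/2}\,\mathcal{L}^\dagger\,\Delta_\Omega^{-1/2}=\mathcal{L}^\dagger$, i.e. once $\mathcal{L}^\dagger$ commutes with $\Delta_\Omega^{1/2}$. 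When that holds the two conditions have identical right-hand sides, so $\mathcal{L}$ satisfies one if and only if it satisfies the other, giving the equivalence in both directions at once. The whole problem thus reduces to showing that TTSFP forces $\mathcal{L}^\dagger$ to commute with $\Delta_\Omega^{1/2}$.

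To obtain that, I would first transfer the commutation from $\mathcal{L}$ to $\mathcal{L}^\dagger$: taking Hilbert--Schmidt adjoints of $[\mathcal{L},\Delta_\Omega^{it}]=0$ and using that conjugation by the unitary $\Omega^{it}$ is Hilbert--Schmidt-unitary with $(\Delta_\Omega^{it})^\dagger=\Delta_\Omega^{-it}$, one gets $[\mathcal{L}^\dagger,\Delta_\Omega^{it}]=0$ for all $t$ as well. Then I would invoke functional calculus: since $\Omega$ has full rank, $\Delta_\Omega$ is a strictly positive operator on the finite-dimensional space $\cc^{d_S\times d_S}$ with spectral decomposition $\Delta_\Omega=\sum_\lambda\lambda P_\lambda$, $\lambda>0$. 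Commuting with $\Delta_\Omega^{it}=\sum_\lambda\lambda^{it}P_\lambda$ for all real $t$ forces commutation with every spectral projector $P_\lambda$ (recovered from the ergodic average $\lim_{T\to\infty}\tfrac{1}{2T}\int_{-T}^{T}\lambda^{-it}\Delta_\Omega^{it}\,\mathrm dt=P_\lambda$), hence with every function of $\Delta_\Omega$, and in particular with $\Delta_\Omega^{1/2}=\sum_\lambda\lambda^{1/2}P_\lambda$. This closes the argument.

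Everything except this last passage is bookkeeping with commuting multiplication superoperators; the one genuinely delicate step is the spectral (analytic-continuation) move from the one-parameter family $\{\Delta_\Omega^{it}\}_{t\in\rr}$ to the single operator $\Delta_\Omega^{1/2}$. I expect it to be harmless in finite dimensions, but it is exactly where both hypotheses are used: the full-rank assumption on $\Omega$ makes $\Delta_\Omega$ invertible so that $\Delta_\Omega^{\pm1/2}$ is well defined, and TTSFP is needed for \emph{all} $t$ rather than a single one. Equivalently, and perhaps more cleanly, one can bypass the integral by differentiating $[\mathcal{L}^\dagger,\Delta_\Omega^{it}]=0$ at $t=0$ to get $[\mathcal{L}^\dagger,\log\Delta_\Omega]=0$ and then exponentiating to $\Delta_\Omega^{1/2}=\me^{\frac12\log\Delta_\Omega}$.
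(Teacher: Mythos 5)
Your proof is correct, and it is a basis-free repackaging of the same underlying mechanism the paper uses, with the one key step proved rather than cited. The paper works in the eigenbasis of $\Omega$: both detailed-balance conditions become matrix-element identities that differ only in their prefactors, $p_i/p_n$ versus $\sqrt{p_i p_j/(p_n p_m)}$, and the modes-of-coherence covariance result (imported from the literature) shows that TTSFP forces every matrix element with $p_i/p_j \neq p_n/p_m$ to vanish, which is exactly where the prefactors could disagree. Your modular superoperator makes this coordinate-free: the modes of coherence are precisely the eigenspaces of $\Delta_\Omega$ (a mode $A_\omega$ satisfies $\Delta_\Omega(A_\omega)=\me^{\omega}A_\omega$), so your spectral projectors $P_\lambda$ are the paper's mode projectors, and your ergodic-average (or differentiate-then-exponentiate) step is a self-contained proof of the covariance fact the paper cites. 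Your route buys three things the paper leaves implicit or external: (i) the factorisation $L_\Omega=\Gamma_\Omega^{1/2}\Delta_\Omega^{1/2}$ reduces the whole theorem to the single commutation $[\mathcal{L}^\dagger,\Delta_\Omega^{1/2}]=0$, after which the equivalence holds in both directions at once because conjugation by the invertible $\Gamma_\Omega^{1/2}$ is injective; (ii) you transfer TTSFP from $\mathcal{L}$ to $\mathcal{L}^\dagger$ explicitly via $(\Delta_\Omega^{it})^\dagger=\Delta_\Omega^{-it}$ --- a point the paper's proof glosses over, since its matrix-element conditions involve $\mathcal{L}^\dagger$, whose mode preservation must also be checked (it does follow, e.g.\ from $\bra{i}\mathcal{L}^\dagger(\ketbra{n}{m})\ket{j}=\overline{\bra{n}\mathcal{L}(\ketbra{i}{j})\ket{m}}$, or from the adjoint-covariance observation in the paper's appendix on maps beyond Davies); (iii) the full-rank hypothesis enters transparently, as invertibility of $\Delta_\Omega$ making $\Delta_\Omega^{\pm 1/2}$ well defined. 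What the paper's concrete version buys in exchange is an explicit display of exactly where the two conditions differ, namely on matrix elements connecting distinct Bohr frequencies; your version proves equivalence without ever exhibiting that difference.
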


\begin{proof}
	We rewrite both Eq. \eqref{eq:qdb2} and Eq. \eqref{eq:qdbid 3} in terms of their individual matrix elements in the orthonormal basis $\{\ket{i}\}$ in which $\Omega=\sum_i p_i \ket{i}\bra{i}$ is diagonal. Eq. \eqref{eq:qdb2} can be written in the form
	\begin{equation}\label{me1}
		\bra{i} \mathcal{L}(\ket{n}\bra{m})\ket{j}= \frac{p_i}{p_n}\bra{i} \mathcal{L}^\dagger(\ket{n}\bra{m})\ket{j}
	\end{equation}
	and Eq. \eqref{eq:qdbid 3} is
	\begin{equation}\label{me2}
		\bra{i} \mathcal{L}(\ket{n}\bra{m})\ket{j}= \sqrt{\frac{p_i p_j}{p_n p_m}} \bra{i} \mathcal{L}^\dagger(\ket{n}\bra{m})\ket{j}.
	\end{equation}
	We can see that for each matrix element the conditions only change by the factors multiplying in front, which are different unless $\frac{p_n}{p_m}=\frac{p_i}{p_j}$.
	
	Let us now introduce the following decomposition of operators in $\cc^{d_S\times d_S}$ in terms of their \emph{modes of coherence}
	\begin{equation}
		A=\sum_\omega A_\omega,
	\end{equation}
	where $A_\omega$ is defined as
	\begin{equation}
		A_\omega=\sum_{\substack{k,l \\ \text{s.t. }\omega=\log{\frac{p_k}{p_l}}}} \proj{k} A \proj{l}.
	\end{equation}
	The name of  \emph{modes of coherence} is due to the fact that under the action of the unitary $\Omega^{-i t} \cdot \Omega^{i t}$ they rotate with a different Bohr frequency, that is
	\begin{equation}
		\Omega^{-i t} A_\omega \Omega^{i t}= A_\omega e^{-i \omega  t}.
	\end{equation}
	
	If the Lindbladian has the property of time-translational invariance w.r.t. the fixed point (Eq. \eqref{eq:tts}), it can be shown \cite{marvian2014modes,lostaglio2015quantum} that each input mode is mapped to its corresponding output mode of the same Bohr frequency $\omega$. We can write this fact as
	\begin{equation}
		\mathcal{L}(A_\omega) = \mathcal{L}(A)_\omega.
	\end{equation}
	This means that in Eq. \eqref{me1} and \eqref{me2}, $\bra{i} \mathcal{L}(\ket{n}\bra{m})\ket{j}=0 $ unless the Bohr frequencies coincide at the input and the output, that is, when $\log{\frac{p_n}{p_m}}=\log{\frac{p_i}{p_j}}$. That is, the two conditions are nontrivial only in those particular matrix elements in which both are equivalent.
\end{proof}

\end{document}